\newcolumntype{?}{!{\vrule width 1pt}}
\setlist[enumerate]{leftmargin=.5in}
\setlist[itemize]{leftmargin=.5in}
\newtheorem{thm}{Theorem}
\newtheorem{prop}{Proposition}
\newtheorem{lem}{Lemma}
\title{Stochastic super-resolution for Gaussian microtextures \thanks{The authors acknowledge the support of the project MISTIC (ANR-19-CE40-005).}}
\author{\'Emile Pierret \thanks{Institut Denis Poisson -- Université d'Orléans, Université de Tours, CNRS
  .(\href{mailto:emile.pierret@univ-orleans.fr}{emile.pierret@univ-orleans.fr},\url{https://www.idpoisson.fr/pierret/},\href{mailto:bruno.galerne@univ-orleans.fr}{bruno.galerne@univ-orleans.fr}, \url{https://www.idpoisson.fr/galerne/})/ }
\and Bruno Galerne \footnotemark[2] \thanks{Institut Universitaire de France (IUF)
  .}}
\DeclareMathOperator{\Tr}{Tr}
\DeclareMathOperator{\Var}{Var}
\DeclareMathOperator{\Cov}{Cov}
\DeclareMathOperator{\ADSN}{ADSN}
\DeclareMathOperator{\argmin}{argmin}
\newcommand{\R}{\mathbb{R}}
\newcommand{\Z}{\mathbb{Z}}
\newcommand{\OMN}{{\Omega_{M,N}}}
\newcommand{\OMNr}{{\Omega_{M/r,N/r}}}
\newcommand{\E  }{{\mathbb{E}}}
\newcommand{\I}{{\boldsymbol{I}}}
\newcommand{\Sub}{{\boldsymbol{S}}}
\newcommand{\G}{\boldsymbol{\Gamma}}
\newcommand{\La}{\boldsymbol{\Lambda}}
\newcommand{\A}{{\boldsymbol{A}}}
\newcommand{\B}{{\boldsymbol{B}}}
\newcommand{\M }{{\boldsymbol{M}}}
\newcommand{\bC}{{\boldsymbol{C}}}
\newcommand{\bCt}{\bC_{\bt}}
\newcommand{\T }{{\mathrm{T}}}
\newcommand{\Tx }{\T_{\x}}
\newcommand{\m}{{\boldsymbol{m}}}
\newcommand{\la }{{\boldsymbol{\lambda}}}
\newcommand{\zero}{{\boldsymbol{0}}}
\newcommand{\W}{{\boldsymbol{w}}}
\newcommand{\V  }{{\boldsymbol{v}}}
\newcommand{\U }{{\boldsymbol{u}}}
\newcommand{\USR}{{\U}_\mathrm{SR}}
\newcommand{\XSR}{{\X}_\mathrm{SR}}
\newcommand{\ULR}{{\U}_{\mathrm{LR}}}
\newcommand{\Uref}{{\U}_{\mathrm{ref}}}
\newcommand{\tref}{{\bt}_{\mathrm{ref}}}
\newcommand{\UHR}{{\U}_{\mathrm{HR}}}
\newcommand{\tU}{\tilde{{\U}}}
\newcommand{\X}{{\boldsymbol{X}}}
\newcommand{\tX}{\tilde{{\X}}}
\newcommand{\ka}{{\boldsymbol{\kappa}}}
\newcommand{\bt}{{\boldsymbol{t}}}
\newcommand{\bc}{{\boldsymbol{c}}}
\newcommand{\be}{\boldsymbol{\beta}}
\newcommand{\etab }{\boldsymbol{\eta}}
\newcommand{\balpha }{\boldsymbol{\alpha}}
\newcommand{\z}{{z}}
\newcommand{\x}{{x}}
\newcommand{\y}{{y}}
\newcommand{\e}{\ensuremath{ \mbox{\scriptsize{E}} }}
\definecolor{brickred}{rgb}{0.8, 0.25, 0.33}
\begin{document}
\maketitle
\begin{abstract}
Super-Resolution (SR) is the problem that consists in reconstructing images that have been degraded by a zoom-out operator. This is an ill-posed problem that does not have a unique solution, and numerical approaches rely on a prior on high-resolution images. 
While optimization-based methods are generally deterministic, with the rise of image generative models more and more interest has been given to stochastic SR, that is, sampling among all possible SR images associated with a given low-resolution input.
In this paper, we construct an efficient, stable and provably exact sampler for the stochastic SR of Gaussian microtextures. 
Even though our approach is limited regarding the scope of images it encompasses, our algorithm is competitive with deep learning state-of-the-art methods both in terms of perceptual metric and execution time when applied to microtextures. 
The framework of Gaussian microtextures also allows us to rigorously discuss the limitations of various reconstruction metrics to evaluate the efficiency of SR routines.
\end{abstract}

\textbf{Keywords: } stochastic super-resolution, Gaussian textures, conditional simulation, kriging, super-resolution with a reference image

\setlength{\tabcolsep}{1pt} 
\renewcommand{\arraystretch}{1} 

\section{Introduction}

Super-Resolution (SR) algorithms aim at producing a High Resolution (HR) image corresponding to a Low Resolution (LR) one, the main challenge being to restore sharp edges as well as high frequency texture content that are lost when applying the zoom-out operator.
Since the set of HR images compatible with a given LR input is an affine subspace of high dimension, a strong prior on realistic images is necessary to recover HR images of high visual quality within this subspace. 
In recent works, this prior is generally conveyed either by exploiting a large datasets of HR images via deep learning models or using a reference HR image to specify an adapted model for the unknown HR image.
The former solution is rather generic while the later is particularly adapted for the SR of texture images that benefits from a particular prior on local statistics of the HR unknown image.
Besides, rather than being deterministic, several contributions propose to tackle the one to many dilemma via stochastic SR which consists in sampling among all acceptable HR images associated with an LR input.
Motivated by such recent contributions, in this paper we solve the stochastic SR problem when the HR texture image is assumed to follow a stationary Gaussian distribution.

\subsection{State of the art and related work}

First deep learning approaches for SR propose to optimize the $L_2$ or $L_1$ reconstruction loss using Convolutional Neural Network (CNN) trained in a end-to-end fashion~\cite{Dong_SRCNN_2014_ECCV,Lim_enhanced_network_2017_CVPR_Workshops,Kim_Accurate_image_SR_2016_CVPR,Wei_Sheng_Deep_Laplacian_pyramidal_SR_2017_CVPR}. 
However, the CNN outputs tend to be blurry as they correspond to the mean of plausible SR solutions~\cite{Sonderby_amortised_MAP_2016_ICLR}.
To avoid this regression to the mean issue, other approaches rely on a perceptual loss~\cite{Bruna_Sprechmann_LeCun_super_resolution_with_deep_convolutional_sufficient_statistics_ICLR2016, wang_recovering_realistic_texture_2018_cvpr,Sajjadi_enhancenet_2017_ICCV}, that is, an $L_2$ loss between pre-trained VGG features \cite{Simonyan_Zisserman_very_deep_cnn_vgg_ICLR2015} introduced in~\cite{Johnson_etal_perceptual_losses_for_real-time_style_transfer_and_super-resolution_ECCV2016}, or Generative Advsersial Network (GAN)~\cite{Goodfellow_et_al_2014} conditioned on the LR input image \cite{Ledig_et_al_Photo_realistic_GAN_2017_CVPR, Wang_et_al_2018_ESRGAN}.
More generally SR is a linear inverse problem that can be tackled by most general deep learning techniques for inverse problems in imaging~\cite{Ongie_etal_deep_learning_techniques_for_inverse_problems_in_imaging_2020} that go beyond end-to-end learning, notably variational frameworks such as deep image prior~\cite{Ulyanov_etal_Deep_image_prior_CVPR2018, Ulyanov_etal_Deep_image_prior_IJCV2020} and Plug-and-Play methods~\cite{Kamilov_etal_PnP_methods_for_integrating_physical_and_learned_models_in_computational_imaging_IEEESPM2023}. We do not explore this important body of literature in this work since we are more interested in stochastic methods.

In contrast with the deterministic approaches mentioned above, Stochastic SR algorithms sample within the set of plausible HR images consistent with the LR input.
Lugmayr \emph{et al.} \cite{Lugmayr_et_al_2020_SRFlow_ECCV} use a conditional normalizing flows \cite{Kingma_Dhariwal_glow_NEURIPS_2018} (a network with invertible layers) to learn this distribution via the change-of-variable formula. Diffusion or score-based models are generative networks which gradually learn to transform noise into samples of a distribution data \cite{song_score-based_2023,Ho_DDPM_2020_Neurips}. 
The diffusion framework can be used to train conditional diffusion models specific to SR \cite{Saharia_et_al_SR3_2021_IEEE}.
Alternatively generic diffusion models can be used to provide approximate conditional sampling procedure for inverse problems such as SR \cite{song_score-based_2023,kawar_DDRM_ICLR_2022,Chung_DPS_ICLR_2023,Saharia_et_al_SR3_2021_IEEE, song_pseudoinverse_guided_diff_models_2023_ICLR}, an approach that can be extended to latent diffusion models~\cite{Rout_etal_solving_linear_inverse_problems_latent_diffusion_models_NeurIPS2023}.
Alternatively, Jiang \emph{et al.} \cite{Jiang_CEM_2021_CVF} propose to wrap a module to a classical network to generate stochastic solutions of the SR samples.
In general diffusion models have impressive performance to generate natural images but can have a lack of realism to generate high-frequency textures (as confirmed below by our comparative experiments). 
Other networks can provide results with some artefacts, even though this may be overcome using parametric priors~\cite{chatillon_statistically_2022}.

Another approach is to apply reference-based SR \cite{Freeman_et_al_example_based_SR_2002_IEEE,Freedman_et_al_Image_upscaling_2010_ACM,Zheng_et_al_Crossnet__2018_ECCV,Chang_et_al_SR_embedding_2004_CVPR,Liu_VAE_Reference_based_2021_CVPR}.
In this setting the LR input is provided with a companion HR image that conveys information on the content of the unknown HR solution, e.g. by providing a HR image of the content from another viewpoint.
Particularly relevant for this paper, in the context of super-resolution of stationary texture images, 
the patch distribution of the reference image can be exploited by minimizing the 
the Wasserstein distance between HR output patches and the reference patches \cite{Hertrich_et_al_Wasserstein_patch_prior_superresolution_IEEETCI2022}.
This variational approach, coined Wasserstein Patch Prior (WPP), exploits important statistics proved to be relevant for texture synthesis\cite{Gutierrez_Galerne_Rabin_Hurtut_optimal_patch_assignment_texture_synth_ssvm2017, Galerne_Leclaire_Rabin_semi_discrete_texture_synthesis_SIIMS2018, Houdard_etal_Wasserstein_generative_models_for_patch_based_texture_synthesis_SSVM2021, Houdard_etal_generative_model_for_texture_synthesis_OT_feature_distribution_IJCV2023}. Let us mention that WPP can be used as a training loss for conditional generative models~\cite{Altekruger_Hertrich_WPPNets_and_WPPFlows_SIAMIS2023} and made more robust using semi-unbalanced optimal transport~\cite{Mignon_et_al_Semi_unbalanced_OT_2023_Eusipco, Mignon_et_al_Semi_unbalanced_OT_2024_HAL}.
The WPP framework provides adaptative statistical guarantees to the proven efficiency of patch-based approaches \cite{Zoran_Weiss_from_learning_models_of_natural_image_patches_to_whole_image_restoration_IVVC2011,Yu_Sapiro_Mallat_PLE_GMM_2012}, which is critical for the SR of texture images.

In this work, we will solely focus on a precise class of textures, the so-called Gaussian microtextures~\cite{Galerne_Gousseau_Morel_random_phase_textures_2011}.
This is a simple texture model that assumes that the texture is stationary (statistically invariant by translation) and follows a Gaussian distribution.
Even though the class of texture images well-reproduced by this model is restricted, the Gaussian modeling allows for many probabilistic and statistical tools to be used for processing these textures.
The textures can be summarized by a local texton~\cite{Desolneux_Moisan_Ronsin_texton_icassp2012,Galerne_Leclaire_Moisan_SOTexton_2014}, can be generated on arbitrary large continuous domains~\cite{GLLD_Gabor_noise_by_example_2012, Galerne_Leclaire_Moisan_texton_noise_2017}, extended to video and visually mixed using Wasserstein barycenters between Gaussian distributions \cite{Xia_Ferradans_Peyre_Aujol_synthesizing_mixing_Gaussian_texture_models_2014}.
Closely related to our work, 
the Gaussian modeling allows for solving texture inpainting via Gaussian conditional sampling based on kriging formulas~\cite{Galerne_Leclaire_Moisan_microtexture_inpainting_icassp2016, Galerne_Leclaire_gaussian_inpainting_siims2017}.
This work provides an iterative algorithm based on conjugate gradient descent last work provides an iterative algorithm based on conjugate gradient descent (CGD) that we extend to SR and use as a reference for evaluating our faster ``Gaussian SR'' solution. A preliminary version of the present work has been presented at ICASSP 2023~\cite{Pierret_Galerne_stochastic_SR_for_gaussian_textures_ICASSP2023}.

\subsection{Contributions}

Our approach builds on the kriging framework used for Gaussian inpainting~\cite{Galerne_Leclaire_Moisan_microtexture_inpainting_icassp2016, Galerne_Leclaire_gaussian_inpainting_siims2017}. 
We further exploit the stationarity of the zoom-out operator to obtain an exact and fast convolution-based algorithm for stochastic SR of grayscale Gaussian textures.
Due to correlation between color chanels, this algorithm does not translate straightforwardly to RGB textures, but we propose an approximate algorithm for RGB textures that proves to be experimentally close to the exact but time-consuming algorithm. 
Finally, on a practical viewpoint, we extend our approach to the context of SR given a reference HR image, and show that, when applied to Gaussian textures, our fast Gaussian SR algorithm is both perceptually better and order of magnitudes faster than recent deep learning or optimal transport-based methods.
While our approach has several inherent limitations that we discuss and illustrate, our strong experimental results show that it is of practical interest when dealing with simple microtextures.

\subsection{Plan of the paper}
The plan of the paper is as follows. In \ref{sec:conditionalsrgraycsale} we present our framework and remind results about Gaussian conditional simulation. Then, we detail an exact, stable and efficient simulation procedure of conditional SR for grayscale Gaussian microtextures. 
In~\ref{sec:conditional_sr_rgb}, we propose an approximate algorithm for stochastic SR of RGB micro textures and compare it with the CGD-based reference approach.
The extension of our sampling algorithm in the practical settings of SR given a reference image is discussed in \ref{sec:gaussian_sr_practice} where we also compare our results with state-of-the-art methods. 
This comparaison leads to a discussion on adapted metrics to evaluate textures SR.
Finally, we document the shortcomings of our approach and show that it extends to more general linear operators in \ref{sec:limitations_and_extensions} and conclude our paper in \ref{sec:conclusion}.

\subsection{Notation}

Let $M,N > 2$ be the size of the HR images and $r > 1$ a zoom-out factor such that the integers $M/r,N/r$ are the size of the LR images (it is assumed that $r$ divides both $M$ and $N)$.
We write $[k] = \{0,\ldots,k-1\}$ for any integer $k\geq 1$, $\OMN = [M]\times[N]$ denotes the pixel grid, and all the images are extended on $\Z^2$ by periodization.
For $\U,\V \in \R^{\OMN}$, $\U \star \V$ designates the discrete and periodic convolution defined by \mbox{$
    (\U \star \V)(\x) =  \sum_{\y \in \OMN} \U\big (\x-\y\big)\V(\y)$} for $\x \in \OMN$. Given a kernel $\bt \in \R^{\OMN}$, let us denote $\bCt \in \R^{\OMN \times \OMN}$ the matrix associated with the convolution by $\bt$ such that for $\U \in \R^{\OMN}$, $\bCt \U = \bt \star \U \in \R^{\OMN}$. 
    $\I_{\OMN}$ denotes the identity matrix on $\OMN$ while $\mathbf{1}_{\OMN}$ denotes the constant image of $\R^{\OMN}$ with value $1$.
$\widecheck{\U}$ stands for the symmetric image of $\U$ defined by for  $\widecheck{\U}(\x) = \U(-\x)$ for $\x \in  \OMN$. We express by $\mathscr{F}(\U)$ or $\widehat{\U}$ 
the Discrete Fourier Transform (DFT) of $\U$ defined by
$\widehat{\U}(\x) =   \sum_{\y \in \OMN} \U(\y) \exp({-\frac{2i\pi x_1 y_1}{M}})\exp(-\frac{2i\pi x_2 y_2}{N})$ for $\x \in \OMN$. 
Let us recall that $\mathscr{F}$ is invertible with $\mathscr{F}^{-1} = \frac{1}{MN}\overline{\mathscr{F}}$, $\mathscr{F}(\widecheck{\U}) = \overline{\mathscr{F}(\U)}$ and $\mathscr{F}(\U \star \V) = \mathscr{F}(\U)\odot \mathscr{F}(\V)$ where  $\odot$ denotes the componentwise product. We denote by $\mathscr{F}_r$ the Fourier transform of images of $\R^{\OMNr}$.

\section{The conditional super-resolution of grayscale Gaussian microtextures}
\label{sec:conditionalsrgraycsale}

In this section we solve the problem of Stochastic super-resolution of grayscale Gaussian microtextures using a known zoom-out operator given by a convolution.
Exploiting both the stationarity of the LR measurements and the HR texture model, our main result states that one can exactly sample new HR microtextures by applying a specific convolution operator.
We first recall the specifics of the zoom-out operator and the ADSN microtexture model. Then we recall how solving for kriging coefficients allows for conditional Gaussian simulation and establish our main result.

\subsection{The zoom-out operator}
The most common zoom-out operator used in the SR literature is the Matlab bicubic zoom-out operator provided by the function \texttt{imresize} \cite{Ledig_et_al_Photo_realistic_GAN_2017_CVPR, Wang_et_al_2018_ESRGAN, bahat2020explorable}.
Let us denote $\A$ this bicubic zoom-out operator by a factor $r$. 
It can be described as a convolution followed by a subsampling $\Sub$ with stride $r$ such that for $\U \in \R^{\OMN}$, $\Sub \U \in \R^{\OMNr}$ and for $\x \in \OMNr$, $(\Sub \U)(\x) = \U(r\x)$ and for $\x \in \OMNr, \V \in \R^{\OMNr}$, $(\Sub^T \V)(r \x) = \V(\x)$, see e.g.~\cite{bahat2020explorable}.
We denote by $\bc$ the kernel associated with the bicubic kernel $\A = \Sub \bC_{\bc}$. In the remaining of the paper, we study the following inverse problem
\begin{equation}
	\label{eq:inverse_problem}
	\ULR = \A\UHR
\end{equation}
	where $\ULR$ is the observed low resolution version of an unknown HR image $\UHR$. This problem is ill-posed, its space of solutions is a high-dimensional affine subspace, and most such solutions are not desirable such that blurry HR versions of the LR image $\ULR$. 
	Our main result provides an efficient sampler of solutions of Equation~\ref{eq:inverse_problem} under the assumption that $\UHR$ follows an ADSN distribution, that is, a Gaussian microtexture model we recall below.

\subsection{The ADSN model associated with a grayscale image}
Given a grayscale image $\U\in\R^{\OMN}$ with mean grayscale $m\in\R$,
one defines the Asymptotic Discrete Spot Noise (ADSN) associated with $\U$, $\ADSN(\U)$ as the distribution 
of $m\mathbf{1}_{\OMN} + \bt \star \W$
where $\bt = \frac{1}{\sqrt{MN}}(\U-m\mathbf{1}_{\OMN})$ is called the texton associated with $\U$ and $\W \sim \mathscr{N}(\zero,\I_{\OMN})$ \cite{Galerne_Gousseau_Morel_random_phase_textures_2011}.
Let us recall that in practice it is preferable to replace $\U$ by its periodic component \cite{moisan_periodic_2011} to avoid boundary issues \cite{Galerne_Gousseau_Morel_random_phase_textures_2011}. 
$\ADSN(\U)$ is a Gaussian distribution with mean $m\mathbf{1}_{\OMN}$ and covariance matrix $\G = \bCt\bCt^T = \bC_{\bt \star \widecheck{\bt}} \in \R^{\OMN \times \OMN}$. Samples can be generated fastly executing the convolution with the Fast Fourier Transform (FFT). 
$\ADSN(\U)$ is stationary, that is, for $\X \sim \ADSN(\U)$ and any $\y \in \Z^2$, $\X(. - \y)$ has the same distribution as $\X$. This implies that the pixelwise variance of this distribution is constant.
This restricts the application of ADSN model to microtextures that do not have structures. 
Practically, substracting and adding the common mean gray-level to LR and HR images is considered as a preprocessing and a post-processing. In what follows, we will  often consider the centered distribution $\bt \star \W \sim \mathscr{N}(\zero,\G)$ that will be called $\ADSN(\U)$ with a slight abuse of notation.

\subsection{The kriging reasoning}

Let $\X \in \R^{\OMN}$ be a random vector following a Gaussian multivariate law $\mathscr{N}(\zero,\G)$ with $\G \in \R^{\OMN \times \OMN}$ and let $\A \in \R^{\OMNr \times \OMN}$ be a linear operator. 
The simulation of $\X$ knowing $\A\X$ can be computed using the following theorem, recalled in \cite{Galerne_Leclaire_gaussian_inpainting_siims2017}, which expresses the link between orthogonality and independence for Gaussian vectors.
\begin{thm}[Contidional Gaussian simulation]
\label{thm:Gaussian_conditional_simulation}
Let $\X \sim \mathscr{N}(\zero,\G)$ and $\A$ be a linear operator. The two Gaussian vectors
$\E(\X|\A\X)$ and $\X - \E(\X|\A\X)$ are independent. Consequently, if $\tX$ is independent of $\X$ with the same distribution then $\E(\X|\A\X) + [\tX - \E(\tX|\A\tX) ]$ has the same distribution as $\X$ knowing $\A\X$.
\end{thm}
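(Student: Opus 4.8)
The plan is to reduce both assertions to two standard facts about jointly Gaussian vectors: that in the Gaussian setting the conditional expectation $\E(\X|\A\X)$ coincides with the linear $L^2$-orthogonal projection of $\X$ onto the space generated by $\A\X$, and that for jointly Gaussian vectors independence is equivalent to vanishing cross-covariance. Throughout I write $R := \X - \E(\X|\A\X)$ for the residual, and similarly $\tilde R := \tX - \E(\tX|\A\tX)$.

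For the first assertion I would start by noting that since $\X \sim \mathscr{N}(\zero,\G)$ and $\A$ is linear, the pair $(\X,\A\X)$ is jointly Gaussian, so the conditional expectation is a linear function of the conditioning variable, say $\E(\X|\A\X) = \B\,\A\X$ for some matrix $\B$. Consequently both $\E(\X|\A\X) = \B\A\X$ and $R = (\I_{\OMN} - \B\A)\X$ are linear images of the single Gaussian vector $\X$, so the stacked vector $(\E(\X|\A\X), R)$ is itself jointly Gaussian.

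The next step is the orthogonality principle: by the defining property of conditional expectation as the $L^2$-projection, the residual $R$ is uncorrelated with every square-integrable function of $\A\X$, and in particular $\Cov(R,\A\X)=\zero$. Since $\E(\X|\A\X)=\B\A\X$ is a linear function of $\A\X$, this gives at once $\Cov(\E(\X|\A\X),R) = \B\,\Cov(\A\X,R) = \zero$. Because the two vectors are jointly Gaussian, this vanishing cross-covariance upgrades to genuine independence, which is the first claim. For the consequence, I would decompose $\X = \E(\X|\A\X) + R$ and observe that, by the independence just proved, conditioning on $\{\A\X=\bmu\}$ fixes the first term to the deterministic value $\B\bmu$ while leaving the law of $R$ unchanged; hence the conditional law of $\X$ given $\A\X=\bmu$ is exactly the law of $\B\bmu + R$. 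Since $\tX$ is an independent copy of $\X$, the residual $\tilde R$ has the same law as $R$ and is independent of $\X$, hence of $\A\X$ and of $\E(\X|\A\X)$; therefore $\E(\X|\A\X)+\tilde R$ equals $\B\bmu + \tilde R$ conditionally on $\{\A\X=\bmu\}$, which has precisely the law of $\B\bmu + R$, i.e. the conditional law of $\X$ given $\A\X$.

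The main point requiring care is the justification of joint Gaussianity at each stage, since it is what lets us pass from ``uncorrelated'' to ``independent'' and what makes $\E(\X|\A\X)$ genuinely linear; the orthogonality principle must be invoked only after this is secured. A small but worthwhile check to record is that the proposed sample is consistent with the observation, namely $\A\bigl(\E(\X|\A\X)+\tilde R\bigr) = \A\X$, which follows from $\A\,\E(\X|\A\X)=\E(\A\X|\A\X)=\A\X$ together with $\A\tilde R = \A\tX - \A\,\E(\tX|\A\tX) = \zero$.
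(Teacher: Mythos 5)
Your proof is correct and complete: joint Gaussianity of the pair $(\E(\X|\A\X),\,\X-\E(\X|\A\X))$ as linear images of the single Gaussian vector $\X$, the orthogonality principle giving vanishing cross-covariance, and the upgrade from uncorrelatedness to independence for jointly Gaussian vectors are exactly the right ingredients, and the closing consistency check $\A\bigl(\E(\X|\A\X)+\tX-\E(\tX|\A\tX)\bigr)=\A\X$ is a worthwhile addition. Note that the paper does not actually prove Theorem~\ref{thm:Gaussian_conditional_simulation} --- it only recalls it from \cite{Galerne_Leclaire_gaussian_inpainting_siims2017}, describing it as ``the link between orthogonality and independence for Gaussian vectors'' --- so there is no in-paper proof to compare against, but your argument is precisely the standard one being alluded to. The only step worth making fully explicit is that the conditioning argument in the second part uses independence of the residual from $\A\X$ itself (not merely from $\E(\X|\A\X)$); this follows immediately from the $\Cov(R,\A\X)=\zero$ you already established together with the joint Gaussianity of $(R,\A\X)$.
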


Furthermore, in the Gaussian context, if the distribution is zero-mean, the relation between the conditional expectation $\E(\X|\A\X)$ and $\A\X$ is linear, as detailed in the following.

\begin{thm}[Gaussian kriging]
\label{thm:kriging}
Let $\X \sim \mathscr{N}(\zero,\G)$ be a \textbf{Gaussian} vector with \textbf{zero mean} and $\A \in \R^{\OMNr \times \OMN}$ be a linear operator.
There exists $\La \in \R^{{\OMNr}\times{\OMN}}$ such that $\E(\X|\A\X) = \La^T\A\X$ and $\E(\X|\A\X) = \La^T\A\X$ if and only if $\La$ verifies the matrix equation
  \begin{equation}
    \label{eq:kriging_matrix}
    \A\G \A^T\La = \A\G.
  \end{equation}
\end{thm}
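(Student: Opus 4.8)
The plan is to characterize the conditional expectation $\E(\X|\A\X)$ as the orthogonal (least-squares) projection of $\X$ onto the subspace of random variables that are linear functions of $\A\X$, and then to translate the defining orthogonality conditions into the matrix equation \eqref{eq:kriging_matrix}. First I would recall that for a zero-mean Gaussian vector, the best predictor $\E(\X|\A\X)$ coincides with the best \emph{linear} predictor, so it must be of the form $\La^T\A\X$ for some matrix $\La \in \R^{\OMNr \times \OMN}$; this is the linearity assertion, and it reduces the whole problem to finding the right coefficient matrix. I would then impose the defining property of conditional expectation in the Gaussian setting, namely that the residual $\X - \La^T\A\X$ is orthogonal (uncorrelated, hence independent by Theorem~\ref{thm:Gaussian_conditional_simulation}) to every component of $\A\X$, i.e. $\Cov(\X - \La^T\A\X,\, \A\X) = \zero$.

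The computational core is to expand this covariance condition. Using $\Cov(\X,\A\X) = \G\A^T$ and $\Cov(\A\X,\A\X) = \A\G\A^T$, the orthogonality $\E\big((\X - \La^T\A\X)(\A\X)^T\big) = \zero$ becomes
\begin{equation}
	\G\A^T - \La^T\A\G\A^T = \zero,
\end{equation}
which upon transposing yields exactly $\A\G\A^T\La = \A\G$, the claimed equation \eqref{eq:kriging_matrix}. I would present the argument as an equivalence: if $\E(\X|\A\X) = \La^T\A\X$ then the orthogonality of the residual to $\A\X$ forces \eqref{eq:kriging_matrix}; conversely, if $\La$ solves \eqref{eq:kriging_matrix}, then the same computation shows $\X - \La^T\A\X$ is uncorrelated with $\A\X$, and since both are jointly Gaussian this uncorrelation gives independence, so $\La^T\A\X$ satisfies the characterizing properties of the conditional expectation (it is $\sigma(\A\X)$-measurable and the residual has zero conditional mean), whence $\E(\X|\A\X) = \La^T\A\X$.

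The one genuinely delicate point is that $\A\G\A^T$ need not be invertible, so $\La$ is not determined by a naive formula $\La = (\A\G\A^T)^{-1}\A\G$; the theorem is deliberately phrased as a solvability/characterization statement rather than giving a closed form. The main obstacle is therefore to argue that equation \eqref{eq:kriging_matrix} is always \emph{consistent} (admits at least one solution), which is what guarantees existence of $\La$. I would establish this by noting that $\A\G = \A\G\A^T \cdot \text{(something)}$ is impossible to write directly, and instead invoke the column-space inclusion: the range of $\A\G = \A\G^{1/2}\G^{1/2}$ is contained in the range of $\A\G\A^T = (\A\G^{1/2})(\A\G^{1/2})^T$, since $\mathrm{range}(BB^T) = \mathrm{range}(B)$ for any matrix $B$ with $B = \A\G^{1/2}$. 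This range inclusion is precisely the solvability condition for the linear system \eqref{eq:kriging_matrix}, so a solution $\La$ exists (the Moore--Penrose pseudo-inverse furnishing one explicit choice). The non-uniqueness of $\La$ is harmless because $\La^T\A\X$ is nonetheless almost surely uniquely defined, as different solutions differ only on the kernel of $\A\G\A^T$, which annihilates $\A\X$ in mean square.
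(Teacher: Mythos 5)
Your proof is correct and, at its core, follows the same projection/normal-equation strategy as the paper: both arguments first invoke the classical fact that the Gaussian conditional expectation is linear and then characterize the coefficient matrix by the least-squares optimality of $\La^T\A\X$. The difference is in how Equation~\ref{eq:kriging_matrix} is extracted. The paper expands $\E\bigl(\|\La^T\A\X-\X\|_2^2\bigr)$ column by column, writes each term as an explicit quadratic in $\la(x)$, and reads off the linear system from the stationarity condition; you instead impose the orthogonality identity $\Cov(\X-\La^T\A\X,\A\X)=\zero$ directly and transpose, which is the same first-order condition obtained in one matrix computation rather than $|\OMN|$ scalar ones. Your version is slightly more compact, and your converse direction (uncorrelated plus jointly Gaussian implies independent, hence the residual has zero conditional mean) makes the ``if and only if'' explicit, whereas the paper leaves it as the equivalence between minimizers and solutions of the normal equations. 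You also add something the paper's appendix omits: a proof that the system is consistent, via the range inclusion $\mathrm{range}(\A\G)\subseteq\mathrm{range}(\A\G^{1/2})=\mathrm{range}(\A\G\A^T)$. The paper only secures existence later, in Proposition~\ref{prop:lambda_convolution}, by interpreting Equation~\ref{eq:kriging_matrix} as the normal equation of the least-squares problem $\argmin_{\La}\|\bCt^T\A^T\La-\bCt^T\|_2^2$ and taking the pseudo-inverse solution; your range argument settles the point already at the level of the theorem, and your remark that distinct solutions $\La$ yield the same $\La^T\A\X$ almost surely is a worthwhile observation that the paper does not state.
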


Theorem~\ref{thm:kriging} is a standard \cite{lantuejoul_geostatistical_2001}. 
Yet since solving Equation~\ref{eq:kriging_matrix} is central in what follows, we give a detailed proof in \ref{appendix:proof_kriging_equation}.
The matrix $\La$ is called the kriging matrix and Equation~\ref{eq:kriging_matrix} is called the kriging equation.  Note that if $\A$  is invertible, a trivial solution for $\La^T$ is $\A^{-1}$. 
In our context of super-resolution, $\A$ is not invertible and we are given an LR image $\ULR = \A\UHR$ with $\UHR$ unknown. Supposing that $\UHR$ is a realization of a law $\ADSN(\U) = \mathscr{N}(\zero,\G)$, we would like to sample HR images $\USR$ such that $\USR$ follows $\mathscr{N}(\zero,\G)$ conditioned on $\A\USR = \ULR$. 
Using Theorem~\ref{thm:Gaussian_conditional_simulation} and Theorem~\ref{thm:kriging}, we aim at computing a HR sample image
\begin{equation}
\label{eq:sample_kriging}
	\USR = \La^T\ULR + \tU - \La^T\A\tU
\end{equation}
 where $\tU \sim \mathscr{N}(\zero,\G)$ and $\La$ verifies the kriging Equation~\ref{eq:kriging_matrix}. 
 Note that $\USR$ is the sum of a deterministic term $\La^T\ULR$,  called the \textit{kriging component}, and a stochastic term $\tU - \La^T\tU$ not depending on $\ULR$, called the \textit{innovation component}. 
 As presented in Figure~\ref{fig:decomposition_kriging}, the kriging component is a HR blurred version of the LR image with added covariance information while the innovation component provides independently the granular aspect of the microtexture. 
 In general, the kriging equation needs to be solved by an iterative method, even with the stationarity assumption of the Gaussian law \cite{Galerne_Leclaire_gaussian_inpainting_siims2017}.
 Due to the stationarity of the Gaussian law $\mathscr{N}(\zero,\G)$ and the convolution form of $\A$, we show below that it is possible to solve Equation~\ref{eq:kriging_matrix} exactly and fastly in the Fourier domain.

\newlength{\fabricwidth}
\setlength{\fabricwidth}{0.196\textwidth}

\begin{figure}
    \centering
    \scriptsize
\begin{tabular}{@{}ccccc@{}}
LR image & HR image & SR Gaussian sample & Kriging comp. & Innovation comp. \\
$\ULR$ & $\UHR$ & $\USR$ & $\La^T\ULR$ & $\tU-\La^T\A\tU$ \\
\includegraphics[width = \fabricwidth]{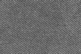}
& \includegraphics[width = \fabricwidth]{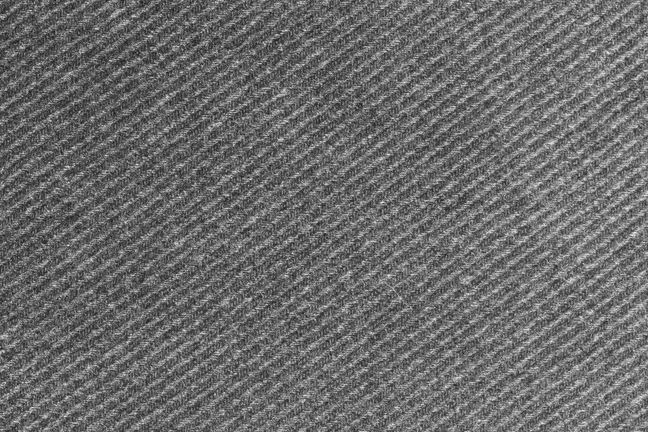}
& \includegraphics[width = \fabricwidth]{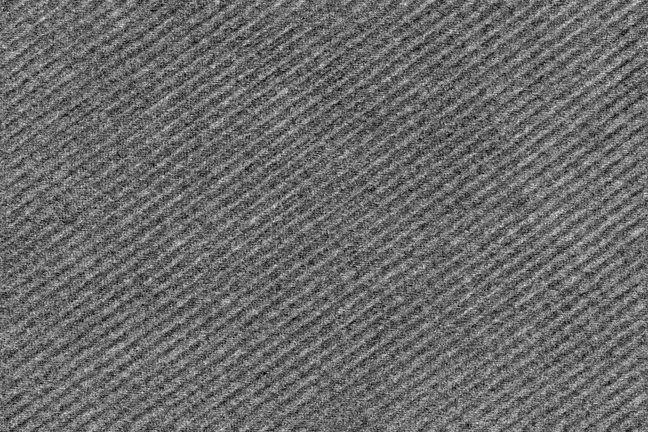}
&\includegraphics[width = \fabricwidth]{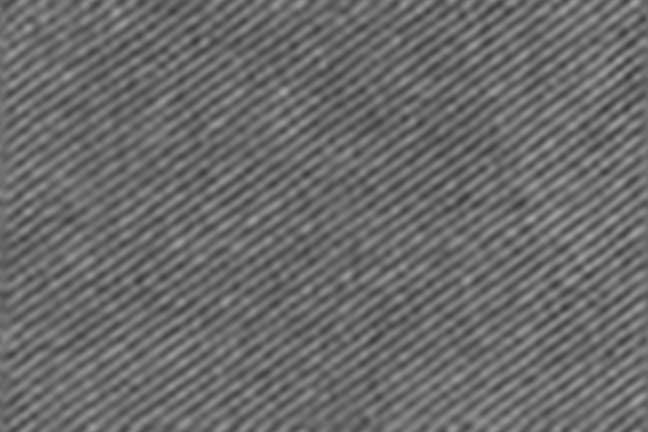}
&\includegraphics[width = \fabricwidth]{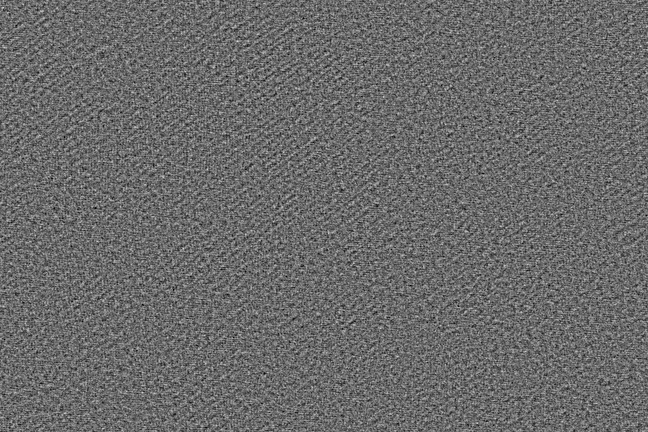}
\end{tabular} 
\caption{\small \label{fig:decomposition_kriging} Super resolution of a Gaussian texture using kriging.
From left to right: LR image, original HR image, conditional Gaussian sample and its corresponding kriging and innovation components.
    HR images size is 416$\times$640 and the zoom-out factor is $r = 8$. 
    The sample is the sum of the kriging component and the innovation component. Grayscale mean is added to all components for visualization. 
    The deterministic kriging component is a kind of pseudo-inversion of the zoom-out operator applied to $\ULR$ that benefits from the knowledge of the covariance of the texture model. 
    The innovation component provides an independent texture grain that is sampled for each realization of $\USR$.
    }
\end{figure}

\subsection{Solving the kriging equation in the super-resolution context}
 
We show in this section that in our SR context, the kriging equation can be solved efficiently with a non-iterative algorithm. 

Equation~\ref{eq:kriging_matrix} may have several solutions and we only consider the specific solution $\La = (\A\G\A^T)^\dagger \A\G$ where $(\A\G\A^T)^\dagger$ is the pseudo-inverse of $\A\G\A^T$. 
The main practical issue is now to compute efficiently $\La^T\V$ for a given LR image $\V \in \R^{\OMNr}$.
As shown by the following proposition, this simply corresponds to applying a specific convolution.

\begin{prop}[Kriging as a convolution]
\label{prop:lambda_convolution}
Let $\A = \Sub \bC_{\bc}$ and let $\G = \bCt\bCt^T = \bC_{\bt \star \widecheck{\bt}}$ be the covariance of a Gaussian distribution $\ADSN(\U)$ with associated texton $\bt = \frac{1}{\sqrt{MN}}(\U-m\mathbf{1}_{\OMN})$.
	$\La = \left(\A\G\A^T\right)^\dagger\A\G$ is an exact solution of Equation~\ref{eq:kriging_matrix} and for all $\V\in \R^{\OMNr}$,
	\begin{equation}
		\La^T\V  = \la\star (\Sub^T \V)
	\end{equation}
	where
		$\la = \bt \star \widecheck{\bt} \star \widecheck{\bc} \star (\Sub^T \ka^\dagger)$
		with $\ka = \Sub(\bt \star \widecheck{\bt} \star \bc \star \widecheck{\bc}) \in \R^{\OMNr}$ and $\ka^\dagger$ the convolution kernel defined in Fourier domain by 
		$$\widehat{\ka}^\dagger(\omega) = \left\{
    \begin{array}{ll}
        \frac{1}{\widehat{\ka}(\omega)} & \mbox{if } \widehat{\ka}(\omega) \neq 0,\\
        0 & \mbox{otherwise},
    \end{array}
	\right.
	\quad \omega \in \R^{\OMNr}.
	$$
\end{prop}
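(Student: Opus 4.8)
The plan is to first reduce the statement to standard pseudo-inverse algebra and then carry out the convolution bookkeeping. I would begin by transposing the proposed solution: since $\G$ is symmetric and $\A\G\A^T$ is symmetric (hence so is its pseudo-inverse), one gets $\La^T = \left(\left(\A\G\A^T\right)^\dagger\A\G\right)^T = \G\A^T\left(\A\G\A^T\right)^\dagger$. To verify that $\La$ solves Equation~\ref{eq:kriging_matrix}, write $\G = \G^{1/2}\G^{1/2}$ with $\G^{1/2}$ symmetric; then $\A\G\A^T = (\A\G^{1/2})(\A\G^{1/2})^T$, so its range equals the range of $\A\G^{1/2}$, which contains the range of $\A\G = (\A\G^{1/2})\G^{1/2}$. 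Since $(\A\G\A^T)(\A\G\A^T)^\dagger$ is the orthogonal projection onto the range of $\A\G\A^T$, it fixes $\A\G$, giving $\A\G\A^T\La = \A\G\A^T(\A\G\A^T)^\dagger\A\G = \A\G$.

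The heart of the proof is to identify every operator involved with a convolution. First I would compute, using $\bC_{\bc}^T = \bC_{\widecheck{\bc}}$ and associativity of convolution, that $\A\G\A^T = \Sub\,\bC_{\bc}\,\bC_{\bt\star\widecheck{\bt}}\,\bC_{\widecheck{\bc}}\,\Sub^T = \Sub\,\bC_{\bb}\,\Sub^T$ with $\bb = \bt\star\widecheck{\bt}\star\bc\star\widecheck{\bc}$. The first key lemma is that subsampling a convolution on both sides yields a convolution on the LR grid: a direct index computation (valid because $r$ divides $M$ and $N$) gives $\left(\Sub\,\bC_{\bb}\,\Sub^T\,\V\right)(\x) = \sum_{\z\in\OMNr}\bb\big(r(\x-\z)\big)\V(\z) = (\ka\star\V)(\x)$ with $\ka = \Sub\,\bb$, exactly the $\ka$ of the statement. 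Hence $\A\G\A^T$ is a circular convolution on $\OMNr$; being normal it is diagonalized by $\mathscr{F}_r$, so its pseudo-inverse is the convolution by the kernel $\ka^\dagger$ whose symbol inverts the nonzero Fourier coefficients of $\ka$ and vanishes elsewhere, which is precisely the definition of $\ka^\dagger$.

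It remains to assemble $\La^T\V = \G\A^T\left(\A\G\A^T\right)^\dagger\V$. Reading from right to left: applying $\left(\A\G\A^T\right)^\dagger$ gives $\ka^\dagger\star\V$ on the LR grid, then $\A^T = \bC_{\widecheck{\bc}}\Sub^T$ gives $\widecheck{\bc}\star\Sub^T(\ka^\dagger\star\V)$, and finally $\G$ prepends the factor $\bt\star\widecheck{\bt}$. The second key lemma is that zero-insertion upsampling turns an LR convolution into the HR convolution of the upsampled factors, $\Sub^T(\ka^\dagger\star\V) = (\Sub^T\ka^\dagger)\star(\Sub^T\V)$; again this follows from a short index check showing both sides are supported on the multiples of $r$ and agree there. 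Collecting all convolution kernels then yields $\La^T\V = \big(\bt\star\widecheck{\bt}\star\widecheck{\bc}\star(\Sub^T\ka^\dagger)\big)\star(\Sub^T\V) = \la\star(\Sub^T\V)$, as claimed.

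The main obstacle I expect is the two commutation lemmas between sampling and convolution: proving $\Sub\,\bC_{\bb}\,\Sub^T = \bC_{\Sub\bb}$ and $\Sub^T(\,\cdot\,\star\,\cdot\,) = (\Sub^T\,\cdot\,)\star(\Sub^T\,\cdot\,)$ cleanly requires careful indexing on the periodic lattice and the divisibility hypothesis $r\mid M,N$; one must also justify that the pseudo-inverse of the circular convolution $\A\G\A^T$ is again a convolution, which rests on the normality of circular convolutions and the orthogonality of the Fourier basis. Everything else is routine transposition and regrouping of convolutions.
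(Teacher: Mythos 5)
Your proposal is correct and follows essentially the same route as the paper: the same commutation lemmas $\Sub\bC_{\balpha}\Sub^T=\bC_{\Sub\balpha}$ and $\Sub^T\bC_{\be}=\bC_{\Sub^T\be}\Sub^T$, the same Fourier-domain identification of $\left(\A\G\A^T\right)^\dagger$ as the convolution by $\ka^\dagger$, and the same right-to-left assembly of $\La^T=\bC_{\la}\Sub^T$. The only (harmless) divergence is in justifying that $\La=\left(\A\G\A^T\right)^\dagger\A\G$ solves the kriging equation: you verify it directly via the range/orthogonal-projection property of $(\A\G\A^T)(\A\G\A^T)^\dagger$, whereas the paper identifies the kriging equation as the normal equation of the least-squares problem $\argmin_{\La}\left\|\bCt^T\A^T\La-\bCt^T\right\|_2^2$; both are standard and equally valid.
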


\begin{proof}
Equation~\ref{eq:kriging_matrix} is a normal equation associated with the least squares problem $\argmin_{\La \in \R^{\OMNr \times \OMN}} \left\|\bCt^T\A^T \La -\bCt^T\right\|_2^2$. Consequently, $\La = \left(\A\G\A^T\right)^\dagger\A\G$ is one of the solutions of Equation~\ref{eq:kriging_matrix}.
	Then, we use the following elementary lemma (proved in \ref{appendix:proof_convolution_subsampling} for completeness).
	\begin{lem}[Convolution and subsampling]
\label{lem:convolution_subsampling}
\begin{enumerate}
	\item If $\bC_{\balpha}$ is the convolution on $\OMN$ by the kernel $\balpha \in \OMN$, $\Sub\bC_{\balpha}\Sub^T$ is the convolution on $\OMNr$ by the kernel $\Sub\alpha$, that is, $\Sub\bC_{\balpha}\Sub^T = \bC_{\Sub\balpha}$.
	\item If $\bC_{\be  }$ is a convolution on $\OMNr$ by the kernel $\be \in \OMNr$,   $\Sub^T\bC_{\be} = \bC_{\Sub^T\be}\Sub^T$.
\end{enumerate}
\end{lem}

By~Lemma~\ref{lem:convolution_subsampling}, $\A\G\A^T = \Sub(\bC_{\bc}\G\bC_{\bc}^T)\Sub^T$ is a convolution matrix associated with the kernel $\ka = \Sub(\bt \star \widecheck{\bt} \star \bc \star \widecheck{\bc}) \in \R^{\OMNr}$. Moreover, the pseudo-inverse of a convolution is also a convolution. $\left(\A\G\A^T\right)^\dagger$ is a  convolution associated with the kernel $\ka^\dagger$ such that for $\omega \in \R^{\OMNr}, \widehat{\ka}^\dagger(\omega) = \left\{
    \begin{array}{ll}
        \frac{1}{\widehat{\ka}(\omega)} & \mbox{if } \widehat{\ka}(\omega) \neq 0 \\
        0 & \mbox{otherwise.}
    \end{array}
	\right.$. Also by Lemma~\ref{lem:convolution_subsampling}, $\Sub^T(\A\G\A^T)^\dagger = \Sub^T \bC_{\ka^\dagger} =  \bC_{\Sub^T\ka^\dagger} \Sub^T$.
	Consequently,
	
	$$\La^T = \G^T\A^T\left(\A\G\A^T\right)^\dagger= \G^T\bC_{\bc}^T\Sub^T \bC_{\ka^\dagger}  =  \G^T\bC_{\bc}^T\bC_{\Sub^T\ka^\dagger} \Sub^T = \bC_{\la}\Sub^T$$
	where $\la = \bt \star \widecheck{\bt} \star \widecheck{\bc} \star \Sub^T \ka^\dagger $.
\end{proof}

In what follows we refer to $\la$ as the \textit{kriging kernel}. Figure~\ref{fig:obs_lambda} shows the DFT of the kernel $\la$ associated with the images of Figure~\ref{fig:decomposition_kriging}. 
In contrast with a non-adaptative pseudo-inversion of the bicubic kernel that would be isotropic,
we can observe that the kernel $\la$ is adapted to the covariance structure of the texture and that it amplifies its characteristic frequencies.

To compute SR samples $\USR$ given by Equation~\ref{eq:sample_kriging}, we need to compute $\la \in \R^{\OMN}$ the kernel associated with $\La^T$ as described in Proposition~Proposition~\ref{prop:lambda_convolution} and then apply $\La^T$ as described in Equation~\ref{eq:sample_kriging}.
These two steps can be done fastly in the Fourier domain. 
The exact and fast corresponding procedure is described in Algorithm~\ref{algo:sampling_super_resolution_grayscale}. To generate several samples, one only needs to rerun the second part of the algorithm.

\setlength{\tabcolsep}{1pt} 
\renewcommand{\arraystretch}{1} 

\newlength{\lambwidth}
\setlength{\lambwidth}{0.196\textwidth}
\begin{figure}
\centering
\scriptsize
\begin{tabular}{@{}ccccc@{}}
 HR image & $\la$ & Sample  & Kriging comp. & Innovation comp.  \\
\includegraphics[width=\lambwidth]{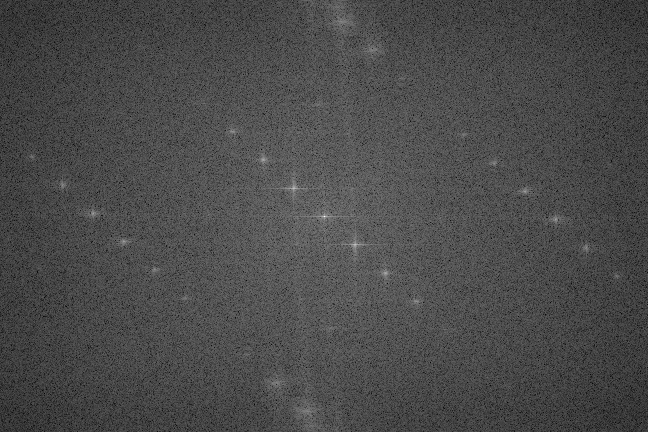} &
 \includegraphics[width=\lambwidth]{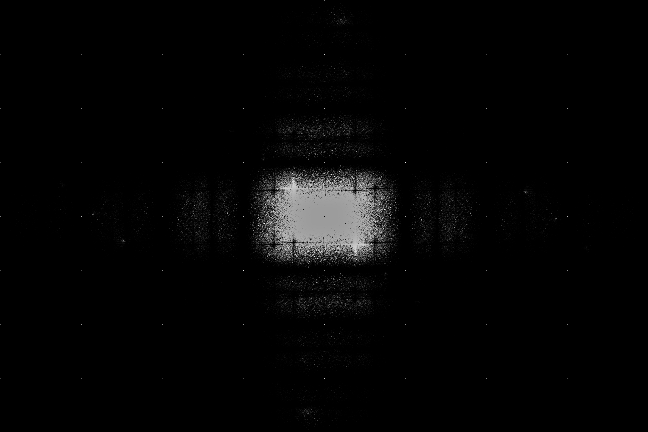} &
  \includegraphics[width=\lambwidth]{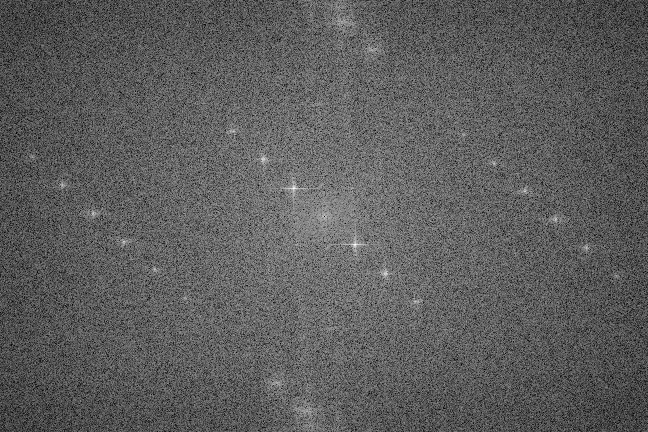} &
  \includegraphics[width=\lambwidth]{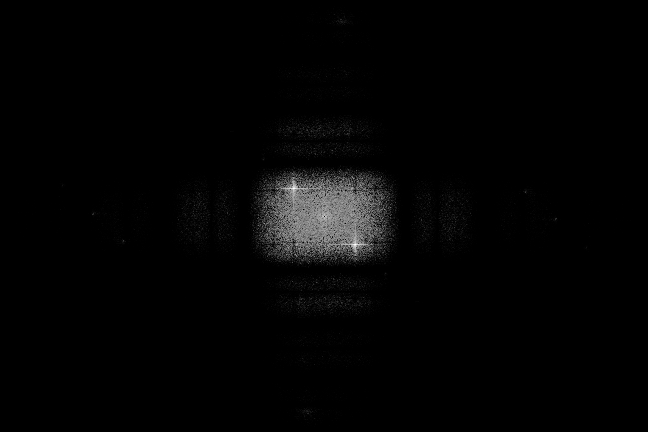} &
    \includegraphics[width=\lambwidth]{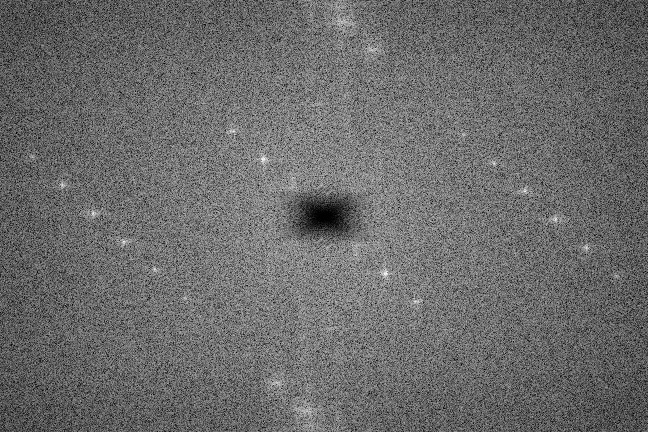} \\
\end{tabular}
	\caption{\small \label{fig:obs_lambda}
Observation of the modulus logscale of DFT of the different components from Figure~\ref{fig:decomposition_kriging}. $\la$ is the kernel associated with the kriging matrix $\La^T$. Note that the kriging component retrieves low frequencies from LR image that are completed with the high frequencies of the innovation component.}
\end{figure}

\begin{algorithm}
\caption{\small Super-resolution sampling for grayscale images with known Gaussian model}
\label{algo:sampling_super_resolution_grayscale}
\begin{algorithmic}
\STATE \textbf{Input:} An image $\ULR \in \R^{\OMNr}$, $r$ the zoom factor, $\bt$ the convolution kernel of the ADSN model, $\bc$ the kernel of the convolution of the zoom-out operator $\A = \Sub\bC_{\bc}$.
\STATE \textbf{Preprocessing:}
\STATE{Compute the grayscale mean $m$ from $\ULR$ and set $\ULR := \ULR - m \mathbf{1}_{\OMN} $}
\STATE{\textbf{Step 1: Computation of the kriging kernel}}
\STATE{Store the DFT transform of the kernel $\la = \bt \star \widecheck{\bt} \star \widecheck{\bc} \star \Sub^T (\ka^\dagger) $}
\STATE{\textbf{Step 2: Simulation of $\USR$}}
\STATE{Sample $\tU = \bt \star \W$ where $\W \sim \mathscr{N}(\zero,\I_{\OMN})$}
\STATE{Compute $\USR =\la  \star \Sub^T(\ULR - \A\tU)+\tU$}
\STATE \textbf{Postprocessing:}
\STATE \textbf{Output:} $ m \mathbf{1}_{\OMN} + \USR$
\end{algorithmic}
\end{algorithm}

The application of $\La^T$ implies the division by $\widehat{\ka}$ in the Fourier domain. However, in practice, this division can be unstable since a non-zero low value of $\widehat{\ka}$ amplifies its corresponding frequency. Actually, in our context, $\La^T$ is applied to LR versions of realizations of an ADSN model that are $\V = \A(\bt \star \W)$ with $\W \in \R^{\OMN}$ an unknown white Gaussian noise. Proposition~\ref{prop:inequality_lambda} ensures that the application of $\La^T$ is stable when applied to images that comply with this assumption. 
\begin{prop}[Stability of the kriging operator on the subspace of the LR ADSN samples]
\label{prop:inequality_lambda}
	Let $\La^T = \G\A^T\left(\A\G\A^T\right)^\dagger \in \R^{\OMN \times \OMNr}$. Then,
	\begin{equation}
	\label{eq:inequation_lambda}
		\forall \W \in \R^\OMN, \left\|\La^T\A(t\star \W)\right\|_2 \leq \|\bCt\|_2  \|\W\|_2 \leq \|\bt\|_1 \|\W\|_2.
	\end{equation}
\end{prop}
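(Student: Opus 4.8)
The plan is to reduce the whole estimate to the norm of a single orthogonal projection. First I would use $\bt \star \W = \bCt \W$ to rewrite the quantity of interest as $\La^T \A \bCt \W$ and introduce the shorthand $\B = \A\bCt$. Since $\G = \bCt\bCt^T$, both blocks defining $\La^T$ factor through $\B$: indeed $\A\G\A^T = \A\bCt\bCt^T\A^T = \B\B^T$ and $\G\A^T = \bCt\bCt^T\A^T = \bCt\B^T$. Substituting into $\La^T = \G\A^T(\A\G\A^T)^\dagger$ gives $\La^T = \bCt\,\B^T(\B\B^T)^\dagger$, so that
$$\La^T\A(\bt\star\W) = \La^T\B\W = \bCt\,\B^T(\B\B^T)^\dagger\B\,\W.$$

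Next I would identify $P := \B^T(\B\B^T)^\dagger\B$ as the orthogonal projector onto the row space of $\B$. This can be checked directly from the Penrose conditions, with no rank or invertibility assumption: $P$ is symmetric because $\B\B^T$ is symmetric (hence so is its pseudo-inverse), and $P$ is idempotent because the identity $(\B\B^T)^\dagger(\B\B^T)(\B\B^T)^\dagger = (\B\B^T)^\dagger$ yields $P^2 = \B^T(\B\B^T)^\dagger\B = P$. A symmetric idempotent matrix is an orthogonal projection, so $\|P\|_2 \leq 1$. The first inequality then follows from submultiplicativity of the operator norm:
$$\left\|\La^T\A(\bt\star\W)\right\|_2 = \|\bCt P\W\|_2 \leq \|\bCt\|_2\,\|P\|_2\,\|\W\|_2 \leq \|\bCt\|_2\,\|\W\|_2.$$

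For the second inequality I would invoke the fact that a periodic convolution is diagonalized by the DFT, so that $\|\bCt\|_2 = \max_\omega |\widehat{\bt}(\omega)|$; bounding each Fourier coefficient by $|\widehat{\bt}(\omega)| \leq \sum_{\y}|\bt(\y)| = \|\bt\|_1$ gives $\|\bCt\|_2 \leq \|\bt\|_1$, which chains onto the previous estimate.

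The only delicate point — and the main thing to get right — is the verification that $P$ is genuinely an orthogonal projection even though $\B\B^T = \A\G\A^T$ is singular; this is precisely where the \emph{pseudo}-inverse (rather than a true inverse) is essential, and it is handled cleanly by the Penrose relation $S^\dagger S S^\dagger = S^\dagger$ applied to the symmetric positive semidefinite matrix $S = \B\B^T$, rather than by any appeal to invertibility. Once the projection structure is isolated, everything else is a one-line norm estimate.
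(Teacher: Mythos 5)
Your proof is correct and follows essentially the same route as the paper: both factor the operator through $\B=\A\bCt$ and reduce the bound to the fact that $\B^T(\B\B^T)^\dagger\B$ is an orthogonal projection (the paper phrases this as $\|\M\M^\dagger\|_2\leq 1$ with $\M=\B^T$, after applying the Penrose identity $\M^\dagger=(\M^T\M)^\dagger\M^T$). Your direct verification of symmetry and idempotence, and your explicit DFT argument for $\|\bCt\|_2\leq\|\bt\|_1$, just make explicit steps the paper leaves implicit.
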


\begin{proof}
  As a reminder \cite{penrose_1955}, let $\M$ being a real-valued matrix,
  $$
  	\M^{ \dagger} = \left(\M^T\M\right)^\dagger \M^T
  $$
  With $\M = \A\bCt^{T}$,
  $$
  	\A\bCt^{T \dagger} = \left(\A\bCt(\A\bCt)^T\right)^\dagger\A\bCt = \left(\A\G\A^T\right)^\dagger\A\bCt
  $$
Consequently, for $\W \in \R^{\OMN}$,
$$
\begin{aligned}
	\La^T \A(\bt \star \W)  =  \G\A^T\left(\A\G\A^T\right)^\dagger\A\bCt\W  =  \bCt (\A\bCt)^T\left((\A\bCt)^T\right)^\dagger\W 
\end{aligned}
$$
and
$$
	\left\|\La^T \A(\bt \star \W) 
\right\|_2
 \leq \left\|\bCt\right\|_2 \underbrace{\left\| (\A\bCt)^T\left((\A\bCt)^T\right)^\dagger\right\|_2}_{\leq 1}
 \left\| \W
\right\|_2
 \leq  \left\|\bCt\right\|_2 \left\|\W
\right\|_2
 \leq  \left\|\bt\right\|_1 \left\|\W
\right\|_2,
$$
using that for any matrix $\M$ one has $\|\M \M^{\dagger}\|_2\leq 1$.
\end{proof}

\section{The conditional super-resolution of RGB Gaussian microtextures}
\label{sec:conditional_sr_rgb}

\subsection{Description of the framework and notation}

We denote by $3\OMN$ the space of the RGB images and by $\U_1,\U_2,\U_3$ the three chanels of a given RGB image $\U \in \R^{3\OMN}$. $\widehat{\U}$ designates the Discrete Fourier transform of $\U$ defined as the 2D DFT of each chanel.
The associated texton of an RGB image $\U$ is $\bt = \frac{1}{\sqrt{NM}}(\U-\m) \in \R^{3\OMN}$ where $\m\in \R^{3\OMN}$ is the mean RGB color. The ADSN model $\ADSN(\U)$ is defined as the distribution of $\m + \begin{pmatrix}
						\bt_1 \star \W \\
						\bt_2 \star \W \\
                        \bt_3 \star \W
                    \end{pmatrix}$ where $\W \sim \mathscr{N}(\zero, \I_{\OMN})$.
Note that the white Gaussian noise is the same in each chanel~\cite{Galerne_Gousseau_Morel_random_phase_textures_2011}. 
$\ADSN(\U)$ is a multivariate Gaussian law associated with the covariance matrix $\G = \begin{pmatrix}
   	\bC_{\bt_1} \\
   	\bC_{\bt_2} \\
   	\bC_{\bt_3} \\
   \end{pmatrix}\begin{pmatrix}
   	\bC_{\bt_1} \\
   	\bC_{\bt_2} \\
   	\bC_{\bt_3} \\
   \end{pmatrix}^T \in \R^{3\OMN \times 3\OMN}$ which is the matrix of a multi-channel convolution, that is, each output chanel is a linear combination of 2D convolutions of the input RGB chanels. 
   More precisely, for all $\U \in \R^{3\OMN}$, 
\begin{equation}
	\left(\G\U\right)_i = \bt_i \star \widecheck{\bt_1} \star \U_1+\bt_i \star \widecheck{\bt_2} \star \U_2+\bt_i \star \widecheck{\bt_3} \star \U_3,\quad 1 \leq i \leq 3.
\end{equation}
Unfortunately, this matrix is not diagonal in the Fourier basis which makes a major difference with the grayscale framework.  We still denote by $\A = \Sub\bC_{\bc}$ the zoom-out operator that acts on each chanel for an RGB image.

\subsection{Approximating the kriging operator in the RGB case}

To simulate conditional sample of the RGB ADSN model, the kriging equation \ref{eq:kriging_matrix} and Lemma~\ref{lem:convolution_subsampling} are still valid, except that now $\A\G\A^T$ and $\left(\A\G\A^T\right)^\dagger$ are multi-chanel convolutions. 
However, while the pseudo-inverse of a 2D convolution is diagonal in the Fourier basis, this is not the case for a multi-chanel convolution.

Computing the pseudo-inverse $\left(\A\G\A^T\right)^\dagger$ turns out to be a critical task.
Even for small images, a direct computation of the pseudo-inverse of $\A\G\A^T$ with standard routines leads to instabilities.
A more principled approach is to remark that for each $\V \in \R^{3\OMNr}$ and each $\omega \in \OMNr$, there exists a $3 \times 3$ matrix $\widehat{K}(\omega)$ such that $\mathscr{F}_r(\A\G\A^T \V)(\omega) = \widehat{K}(\omega) \widehat{\V}(\omega)$, reducing the problem to computing one 3$\times$3 pseudo-inverse for each frequency.

Still, each matrix is close to be singular which leads to a high instability in practice.

In order to keep an algorithm which is as fast and as stable as in the grayscale case (Algorithm~\ref{algo:sampling_super_resolution_grayscale}), we propose to make the following approximation: let $\X \in \R^{3\OMN}$, following the Gaussian law $\mathscr{N}(\zero,\G)$,
\begin{equation}
\mathbb{E}\left[\X_i \mid \A\X_1,\A\X_2,\A\X_3\right] \approx \mathbb{E}\left[\X_i \mid \A\X_i\right], \quad 1 \leq i \leq 3.
\end{equation}
 The rational behind this approximation is that to reconstruct the chanel $\X_i$ from the three chanels $\A\X = (\A\X_1 \A\X_2 \A\X_3)$, the more relevant information is in $\A\X_i$. This is a reasonable assumption with respect to the behavior of natural images and the form of the bicubic convolution kernel. 
This approximation amounts to use 
\begin{equation}
\La_{\text{approx}} = \begin{pmatrix}
	\La_1 & \zero & \zero \\
	\zero & \La_2 & \zero \\
	\zero & \zero & \La_3\\
\end{pmatrix} \in \R^{3\OMNr \times 3\OMN}
\end{equation}
as an approximate solution of Equation~\ref{eq:kriging_matrix} in the RGB setting,
where for each $1 \leq i \leq 3$, $\La_i \in \R^{\OMNr \times \OMN}$ is a solution of
\begin{equation}
\A\G_i\A^T\La_i = \A\G_i	, \quad 1 \leq i \leq 3,
\end{equation}
$\G_i$ being the covariance matrix of the $i$th chanel.
While $\La_{\text{approx}}$ is not a solution of the kriging equation, it is the kriging matrix associated with an ADSN law for which the RGB chanels are uncorrelated (which is not an interesting model to generate textures). 
However, we apply this kriging operator $\La_{\text{approx}}^T$ to textures having properly correlated chanels according to the ADSN model $\mathscr{N}(\zero, \G)$.
More precisely, a RGB Gaussian SR sample $\U_{\text{SR}}$ is defined as 
\begin{equation}
	\U_{\text{SR},i} = \La_i^T(\U_{\text{LR},i} - \A\tU_{i}) + \tU_i, \quad 1 \leq i \leq 3,
\end{equation}
where, for $1 \leq i \leq 3$, $\tU_i = \bt_i \star \W$ with $\W \sim \mathscr{N}(\zero,\I_{\OMN})$ the common noise used for each channel and the multiplication by $\La_i^T$ is done following Algorithm~\ref{algo:sampling_super_resolution_grayscale}.

In the next section, we assess that our proposed approximation is harmless by evaluating it with an exact reference iterative alternative.

\subsection{Comparison with the reference CGD algorithm}

In \cite{Galerne_Leclaire_gaussian_inpainting_siims2017}, the authors propose to solve the kriging equation for Gaussian microtextures to resolve the inpainting by conditional simulation. 
To solve Equation~\ref{eq:kriging_matrix}, they apply a CGD to compute $\left(\A\G\A^T\right)^\dagger \varphi$ for a given $\varphi \in \R^{\OMNr}$. For completeness, the algorithm is recalled in \ref{appendix:algo_CGD}. This is an iterative algorithm with a stopping criteria $\varepsilon$ and a given number of iterations. The output $\psi$ of CGD tends to minimize 
\begin{equation}
\label{eq:residual_CGD}
	\left\|(\A\G\A^T)^2\varphi - (\A\G\A^T)\psi\right\|_2
\end{equation} 
which is the residual for the normal equation associated with Equation~\ref{eq:kriging_matrix}.
We consider CGD with a high number of $10^6$ steps as a reference. 
The recommended number of steps for the inpainting problem is $10^2$ for fast results and $10^3$ for high-quality results \cite{Galerne_Leclaire_Gaussian_inpainting_ipol_2017} and our experiments show that these recommendations remain valid for super-resolution. 
Results from our fast convolution-based algorithm and this reference CGD-based algorithm are presented in Figure~\ref{fig:comparison_gradient_Gaussian_SR} for both the grayscale and RGB cases. 
The same realization of noise is used for the simulations so that both algorithm should produce the same image.
Samples are visually similar for the two examples and when visualized as a sequence the texture details added by the CGD algorithm do not evolve after $10^3$ steps.

Ìn~\ref{tab:comparison_gradient_Gaussian_SR}, we compare the samples generated with our algorithm and the CGD routine for different number of steps using the PSNR with respect to the CGD output after $10^6$ iterations and the CGD residual as metrics. For the grayscale image, we observe that our method is indeed exact with very low residual value. 
The high PSNR indicates that our direct Gaussian SR algorithm is more precise than CGD $10^5$ steps.

In the color case, as expected, Gaussian SR is not exact and provides higher residual and PSNR values.
Still the Gaussian SR samples are closer to CGD outputs with a number of steps between $10^4$ and $10^5$, which has been observed for several textures and for different zoom-out factors $r$.
This validates our approach since the minor approximation we introduced is in practice negligible.
Let us recall that the main objective of our approximation is to reduce the simulation time. 
As reported in~\ref{tab:comparison_gradient_Gaussian_SR}, producing such an accurate RGB sample with the CGD algorithm is four to five orders of magnitude longer than with our direct convolution-based algorithm.

\newlength{\cgdwidth}
\setlength{\cgdwidth}{0.196\textwidth}

\begin{figure}
\centering
\tiny
\begin{tabular}{@{}ccccc@{}}
\multicolumn{5}{c}{ Grayscale example } \\
\midrule
LR image & HR image & Gaussian SR (ours) & CGD ($10^2$ steps) & CGD ($10^6$ steps) \\
\includegraphics[width=\cgdwidth]{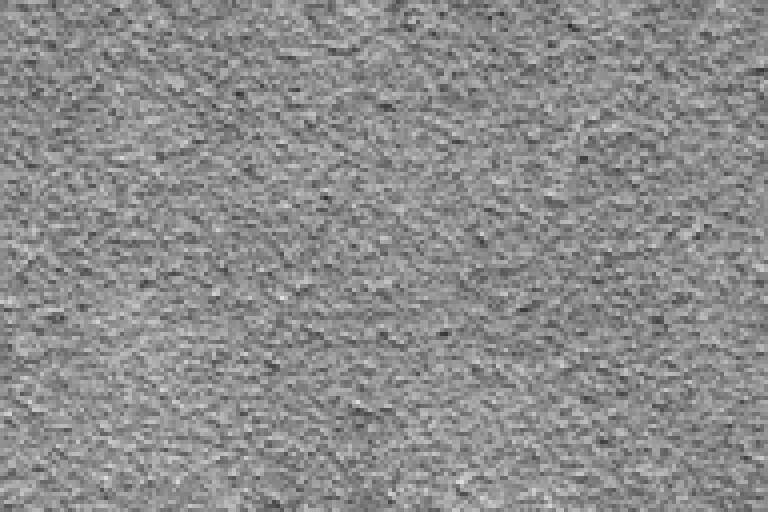}
&
\includegraphics[width=\cgdwidth]{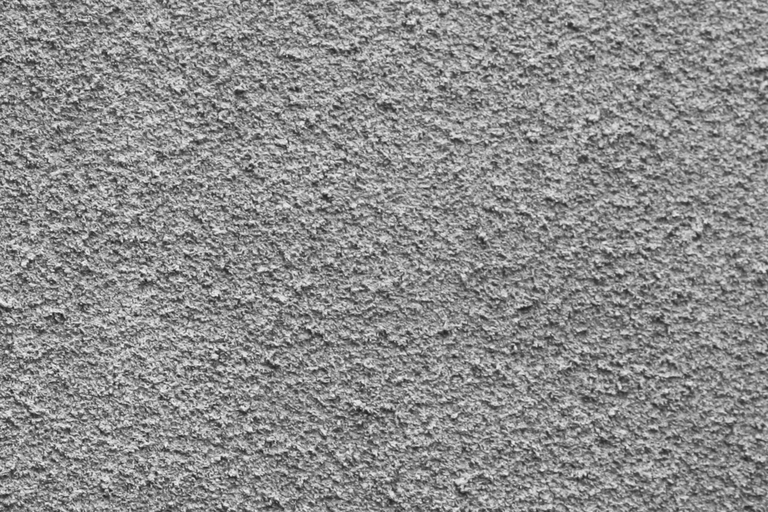}
&
\includegraphics[width=\cgdwidth]{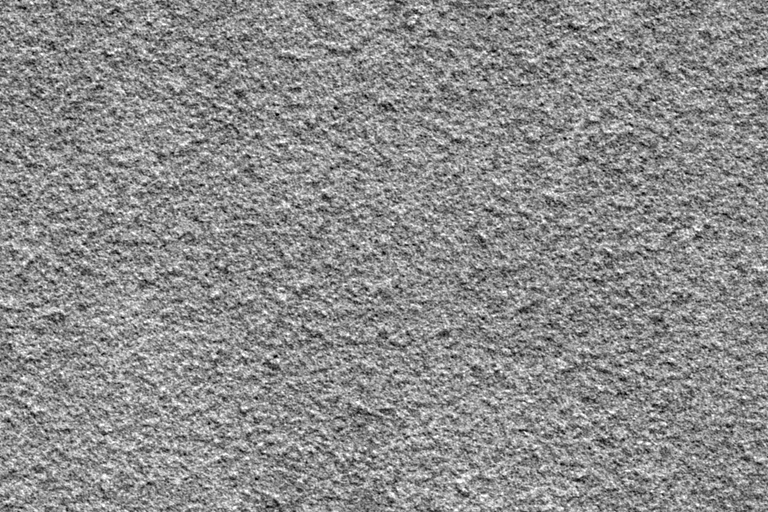}
&
\includegraphics[width=\cgdwidth]{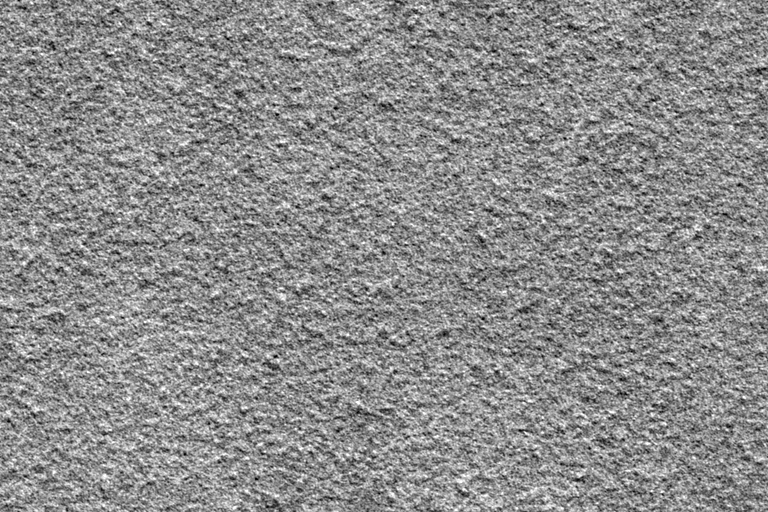} 
&
\includegraphics[width=\cgdwidth]{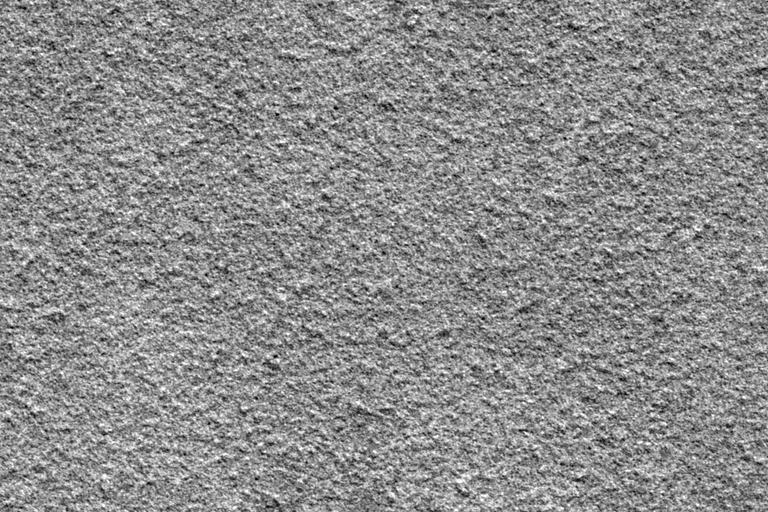}  \\
\multicolumn{5}{c}{ RGB example } \\
\midrule
LR image & HR image & Gaussian SR (ours) & CGD ($10^2$ steps)  & CGD ($10^6$ steps) \\
\includegraphics[width=\cgdwidth]{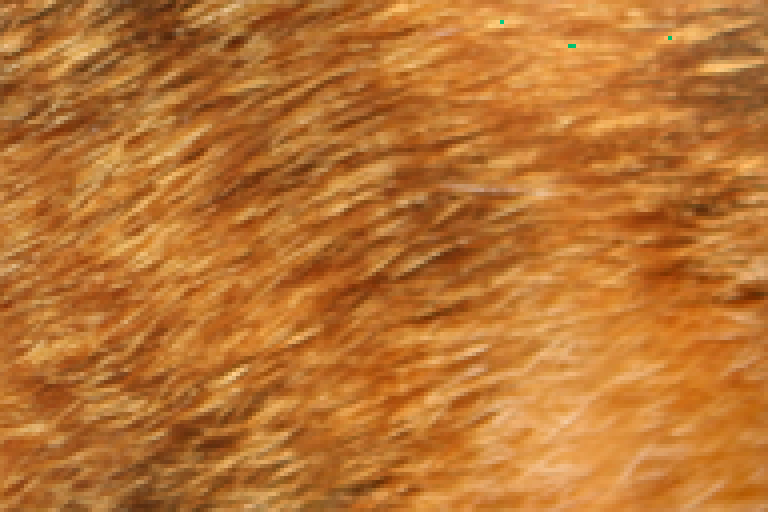}
&
\includegraphics[width=\cgdwidth]{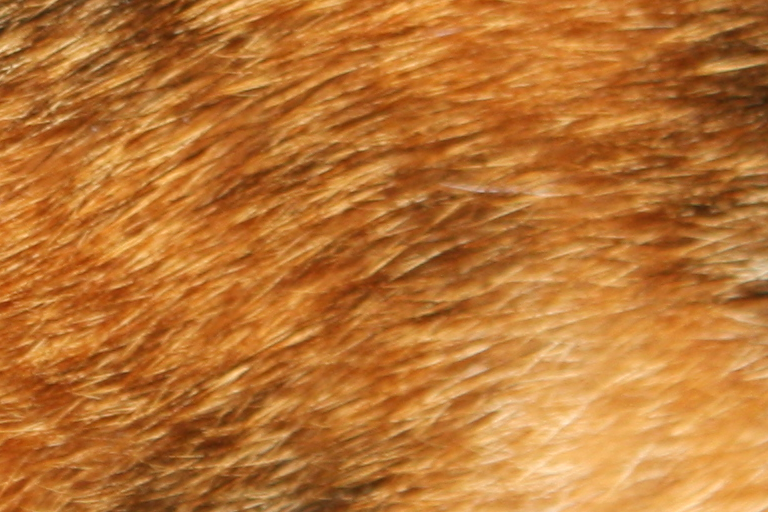}
&
\includegraphics[width=\cgdwidth]{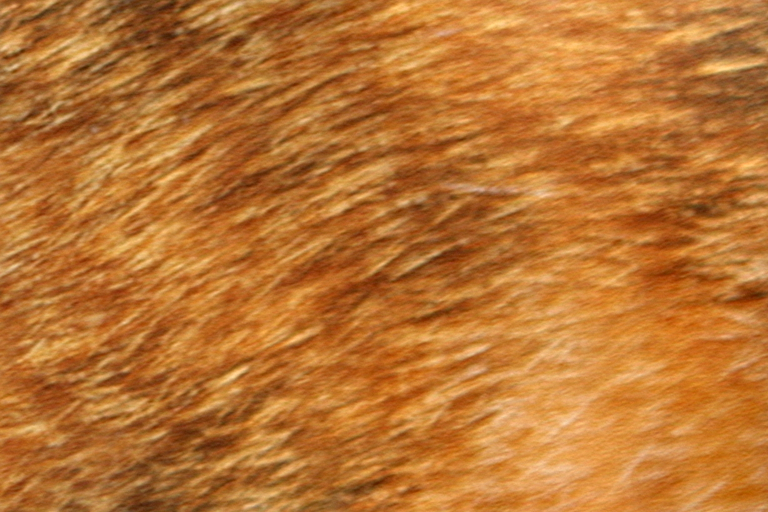}
&
\includegraphics[width=\cgdwidth]{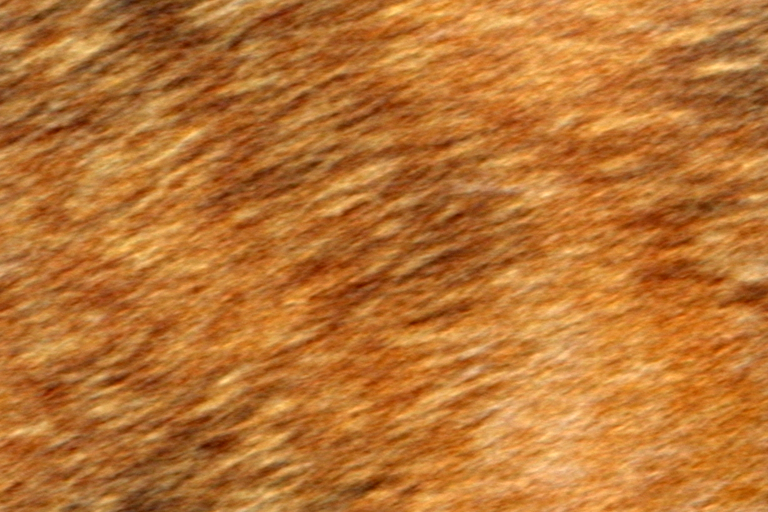} 
&
\includegraphics[width=\cgdwidth]{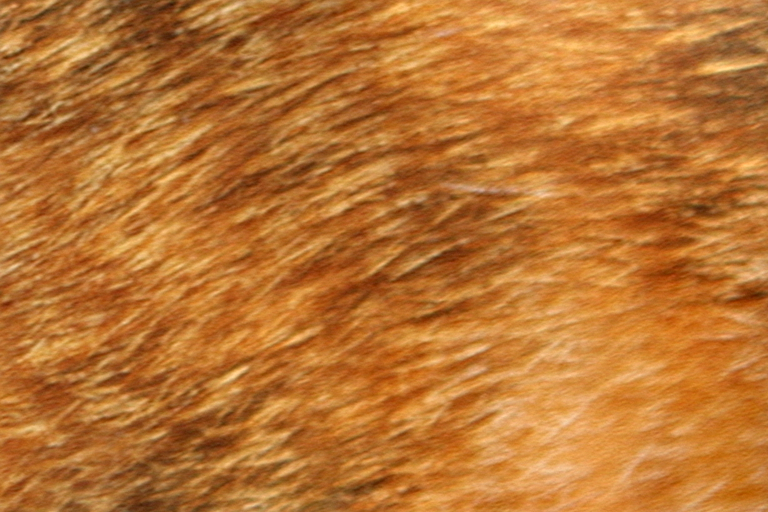}
\end{tabular}
\caption{\small \label{fig:comparison_gradient_Gaussian_SR} Comparison of our direct algorithm (Gaussian SR) and the iterative CGD-based algorithm for grayscale (first row) and RGB (second row) images. 
The HR image size is 512$\times$768, the SR factor is $r = 4$, and the same noise realization is used to compute the innovation components to allow visual comparison. 
The two methods provide similar results even for the RGB case for which our algorithm is not exact.}
\end{figure}

\setlength{\tabcolsep}{6pt} 

\begin{table}
\centering
\footnotesize
\begin{tabular}{llcclcc}
\toprule
\multicolumn{7}{c}{Comparison with the reference CGD algorithm} \\
    \toprule
    & \multicolumn{3}{c}{Grayscale image} & \multicolumn{3}{c}{RGB image} \\
	\cmidrule(lr){2-4}
     \cmidrule(lr){5-7}
     & Residual  & Time(s) &  PSNR w.r.t   &  Residual  & Time(s) &  PSNR w.r.t   \\
     &   & (CPU) & CGD ($10^6$ st.)  &   & (CPU) &  CGD ($10^6$ st.) \\
     \midrule
     Gaussian SR  & $1.32\e{-16}$  &0.01 &   $151.17$ &  $2.54\e{-1}$  & 0.01 & 37.94  \\ 
     CGD ($10^2$ steps)  & $2.74\e{-2}$ & 0.13 & {\color{white}1}29.16 & $2.73\e{0}$   &  0.38   & 25.08\\ 
     CGD ($10^3$ steps)  & $1.03\e{-3}$ & 0.76 & {\color{white}1}47.49 & $3.78\e{-1}$   & 2.49   & 30.19\\ 
     CGD ($10^4$ steps)  & $4.72\e{-5}$ & 7.44 & {\color{white}1}67.60 &  $1.36\e{-1}$  &  23.1 & 35.07 \\ 
     CGD ($10^5$ steps)  & $1.30\e{-8}$ & 81.3 & 145.31 &  $2.37\e{-2}$  &  258 & 39.71 \\ 
     CGD ($10^6$ steps)  & $2.69\e{-42}$ & 749 & - &   $4.97\e{-3}$ &   2588 & -\\ 
\bottomrule 
\end{tabular}
\caption{\small Quantitative evaluation of the results of Figure~\ref{fig:comparison_gradient_Gaussian_SR}.
The residual is the norm from Equation~\ref{eq:residual_CGD}. 
The CGD is executed with different numbers of steps where $10^2$ is the pre-set parameter in the online demo \cite{Galerne_Leclaire_Gaussian_inpainting_ipol_2017} for inpainting. 
The grayscale example confirms that our method is exact in this context. 
The RGB example shows that our approximation is harmless for RGB images.
The different execution times show that producing samples with similar quality to our Gaussian SR samples using the CGD algorithm is four to five orders of magnitude slower.\label{tab:comparison_gradient_Gaussian_SR}}
\end{table}

\setlength{\tabcolsep}{1pt} 

\section{Gaussian SR in practice: Reference image and comparison with state-of-the-art}
\label{sec:gaussian_sr_practice}

\subsection{Gaussian SR with a reference image}
\label{subsec:with_reference_image}

To apply Gaussian SR algorithm, the $\ADSN$ distribution followed by $\UHR$ has to be known (via the computation of the associated texton $\bt$). 
In the experiments of the previous section, the texton is computed from the ground truth HR image $\UHR$,  making the algorithm impractical.

From now on, following recent contributions on texture super-resolution~\cite{Hertrich_et_al_Wasserstein_patch_prior_superresolution_IEEETCI2022, Altekruger_Hertrich_WPPNets_and_WPPFlows_SIAMIS2023},
we consider that the input image $\ULR$ is given with a companion reference HR image $\Uref$ such that both HR images $\Uref$ and the unknown $\UHR$ are supposed to be realizations of the same texture model.
More precisely, in our context based on the ADSN model, Algorithm~\ref{algo:sampling_super_resolution_grayscale} will be called with input an LR image $\ULR$ to be upscaled and the texton  
$\tref = \frac{1}{\sqrt{MN}}(\Uref-m\mathbf{1}_{\OMN})$
associated with the reference image $\Uref$, meaning that $\UHR$ is supposed to be a realization of the $\ADSN(\Uref)$ law.
This makes our routine a stochastic super-resolution algorithm with reference image.

Figure~\ref{fig:gaussian_sr_examples} presents illustrative results that show the practicality of this reference-based approach when the reference images are HR images of another part of the same material.
Figure~\ref{fig:gaussian_sr_examples} also displays the kriging and innovation components and one can observe that the ADSN model from the reference image is good enough to compute an adapted kriging component.

\newlength{\exampleswidth}
\setlength{\exampleswidth}{0.163\textwidth}
\begin{figure}[h!]
\centering
\scriptsize
	\begin{tabular}{@{}cccccc@{}}
		LR image & HR image & Reference image & Sample & Kriging & Innovation \\
		\includegraphics[width = \exampleswidth]{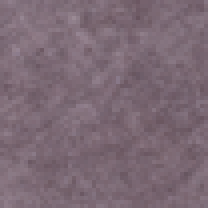}
		& \includegraphics[width =\exampleswidth]{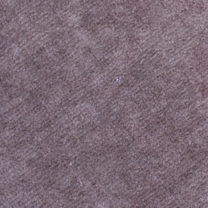}
		& \includegraphics[width = \exampleswidth]{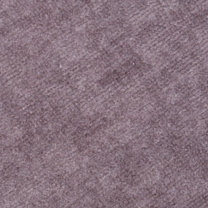}
		& \includegraphics[width = \exampleswidth]{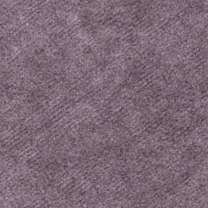}
		& \includegraphics[width = \exampleswidth]{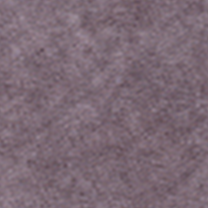}
		& \includegraphics[width = \exampleswidth]{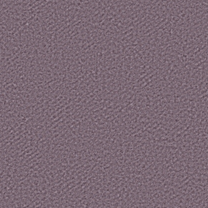} \\
		
		\includegraphics[width = \exampleswidth]{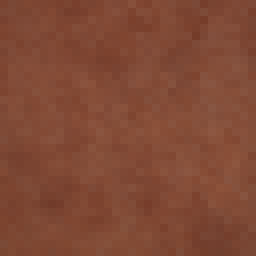}
		& \includegraphics[width =\exampleswidth]{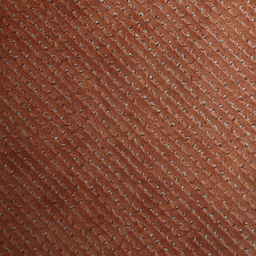}
		& \includegraphics[width = \exampleswidth]{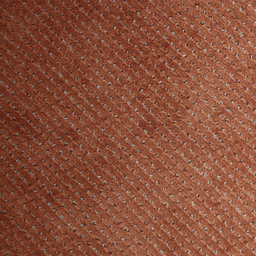}
		& \includegraphics[width = \exampleswidth]{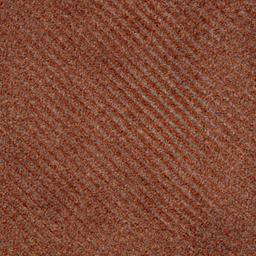}
		& \includegraphics[width = \exampleswidth]{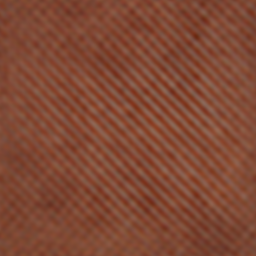}
		& \includegraphics[width = \exampleswidth]{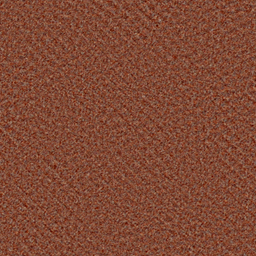} \\
		
		\includegraphics[width = \exampleswidth]{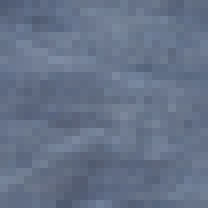}
		& \includegraphics[width =\exampleswidth]{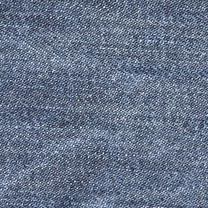}
		& \includegraphics[width = \exampleswidth]{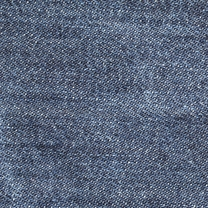}
		& \includegraphics[width = \exampleswidth]{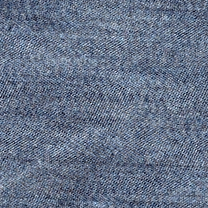}
		& \includegraphics[width = \exampleswidth]{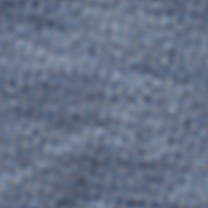}
		& \includegraphics[width = \exampleswidth]{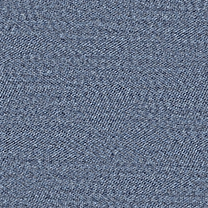} \\
		
		\includegraphics[width = \exampleswidth]{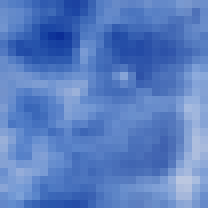}
		& \includegraphics[width =\exampleswidth]{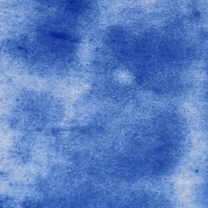}
		& \includegraphics[width = \exampleswidth]{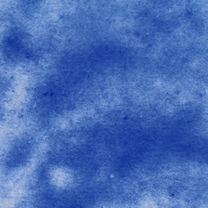}
		& \includegraphics[width = \exampleswidth]{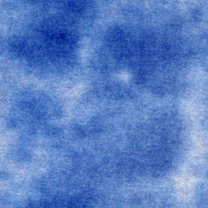}
		& \includegraphics[width = \exampleswidth]{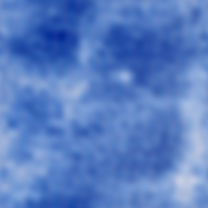}
		& \includegraphics[width = \exampleswidth]{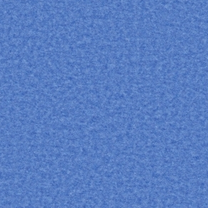} \\
		
		\includegraphics[width = \exampleswidth]{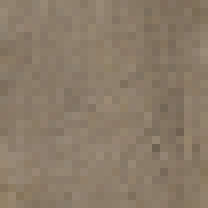}
		& \includegraphics[width =\exampleswidth]{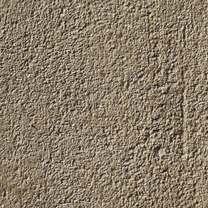}
		& \includegraphics[width = \exampleswidth]{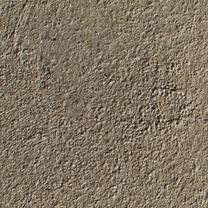}
		& \includegraphics[width = \exampleswidth]{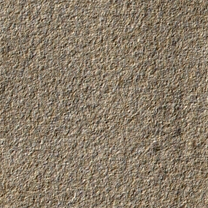}
		& \includegraphics[width = \exampleswidth]{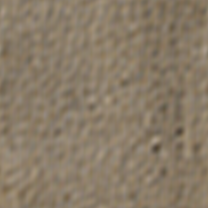}
		& \includegraphics[width = \exampleswidth]{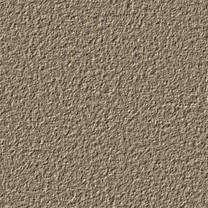} \\
		
     	 \includegraphics[width = \exampleswidth]{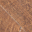}
		& \includegraphics[width =\exampleswidth]{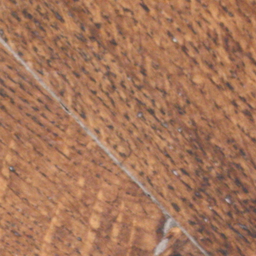}
		& \includegraphics[width = \exampleswidth]{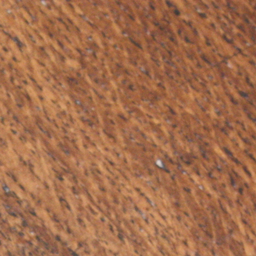}
		& \includegraphics[width = \exampleswidth]{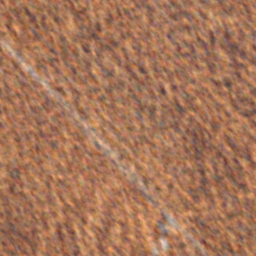}
		& \includegraphics[width = \exampleswidth]{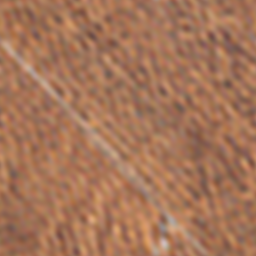}
		& \includegraphics[width = \exampleswidth]{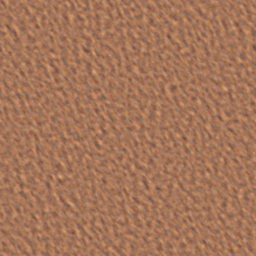} \\
	\end{tabular}
	\caption{\small \label{fig:gaussian_sr_examples}  Illustration of Gaussian SR with a reference image (HR size is 208$\times$208 and $r = 8$). Our method performs well on fabric textures and we can observe the grain provided by the innovation component for different types of texture. Some details are not recovered by our algorithm due to the stationarity assumption.}
\end{figure}

\subsection{Comparison with other methods}

Contrary to neural networks trained on a very large dataset of generic images, our method only works on the restricted area of Gaussian microtextures. This limited scope is a main drawback of Gaussian SR that cannot work on more generic images. Moreover, our method necessitates a HR reference image.
Still it is interesting to compare our method with existing super-resolution algorithms, although one should keep in mind that these methods are designed to be used on a broader range of images.

We compare our method with other super-resolution methods on two microtextures in Figure~\ref{fig:comparaison_methods_Wall},\ref{fig:comparaison_methods_rerA} and Table~\ref{tab:metrics}.
Let us describe the concurrent algorithms and the metrics used for comparison.

Super-Resolution using Normalizing Flow (SRFlow\footnote{ Code and weights from \url{https://github.com/andreas128/SRFlow}}) \cite{Lugmayr_et_al_2020_SRFlow_ECCV}, Denoising Diffusion Restoration Models (DDRM\footnote{ Code and weights from \url{https://github.com/bahjat-kawar/ddrm}}) \cite{kawar_DDRM_ICLR_2022} and  Diffusion Posterior Sampling (DPS\footnote{ Code and weights from \url{hhttps://github.com/DPS2022/diffusion-posterior-sampling}}) \cite{Chung_DPS_ICLR_2023} are diverse super-resolution routines where several samples are proposed to solve the inverse problem \ref{eq:inverse_problem}. SRFlow is a normalizing flow network which has been trained during 5 days on a single NVIDIA V100 GPU with a general dataset of general images DIV2K \cite{div2K_Agustsson_2017_CVPR_Workshops} to solve the SR problem. It depends on a hyperparameter $\tau$, the temperature, that modulates the variance of the latent space. 
DPS and DDRM work with a diffusion model that has been trained on the dataset ImageNet \cite{imagenet_deng_2009} for image generation during several days on GPU \cite{Dhariwal_et_al_Diffusion_beats_gan_Neurips_2021}. A step of attachment to degraded data is added in the generation to solve inverse problems such that SR. DPS proposes a gradient term and DDRM an adaptative reverse process of the score-based model depending on the singular values of the operator $\A$. 
Similar to our method, Wasserstein Patch Prior (WPP\footnote{\url{https://github.com/johertrich/Wasserstein_Patch_Prior}}) \cite{Hertrich_et_al_Wasserstein_patch_prior_superresolution_IEEETCI2022} proposes to solve SR problem in the SISR setting using a reference image.  
It is a multiscale iterative algorithm that optimizes at several resolutions a optimal transport distance between patches of the proposed image and the ones of the reference image. 
In this comparative study, only WPP and Gaussian SR use the reference image while SRFlow, DDRM and DPS are trained on a larger dataset of generic images (and are not restricted to texture restoration).

To compare the methods, we use the four metrics Peak Signal to Noise Ratio (PSNR), LR-PSNR, Structural SIMilarity(SSIM)  and  Learned  Perceptual  Image  Patch  Similarity  (LPIPS). The PSNR is the logarithmic scale of the mean-square error (MSE) between the HR image $\UHR$ and the solution proposed by the method. LR-PSNR is the PSNR between the LR image and the LR version of the solution and quantifies fidelity to the input data $\ULR$. The SSIM~\cite{wang_image_2004} quantifies the similarity between images relying on a luminance, a contrast and a structure terms. LPIPS~\cite{zhang_unreasonable_2018} is a perceptual metric that compute the weighted squared distance between features of the two images of a classification network. The weights have been trained with respect to perceptual criteria.

Quantitatively, as observed in Figure~\ref{fig:comparaison_methods_Wall},\ref{fig:comparaison_methods_rerA}, DDRM tends to produce blurred samples and DPS smooth images and consequently do not really recover the texture. SRFlow with a positive temperature gives good textured solutions but with artefacts generally observed in networks' outputs. WPP generates textured but too smooth image. 
We can also observe that the different luminance of the reference image in Figure~\ref{fig:comparaison_methods_rerA} influences the WPP result. 
Gaussian SR gives the desired perceptual grain of the textures. However, it does not retrieve correctly the details, as it can be observed with the white spot in Figure~\ref{fig:comparaison_methods_Wall}.

\newlength{\srrefwidth}
\setlength{\srrefwidth}{0.163\textwidth}

\begin{figure}
\scriptsize
\centering
	\begin{tabular}{@{}cccccc@{}}
	LR image & HR image & Reference image & Gaussian SR (ours) & Kriging comp. & Innovation comp. \\ 
	\includegraphics[width = \srrefwidth]{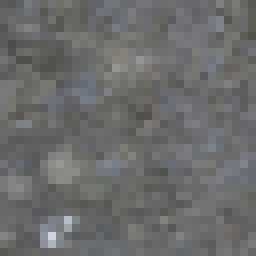} 
	& \includegraphics[width = \srrefwidth]{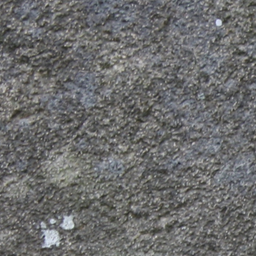} 
	& \includegraphics[width = \srrefwidth]{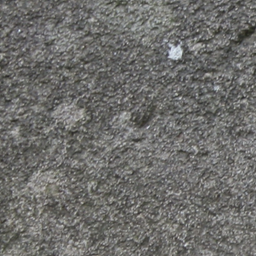} 
	& \includegraphics[width = \srrefwidth]{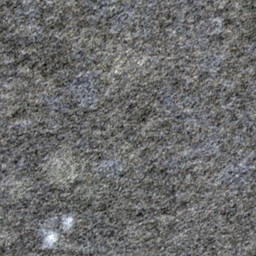} 
	& \includegraphics[width = \srrefwidth]{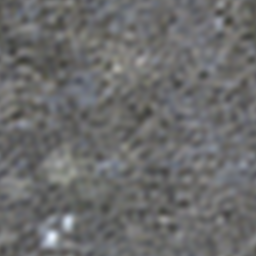}
	& \includegraphics[width = \srrefwidth]{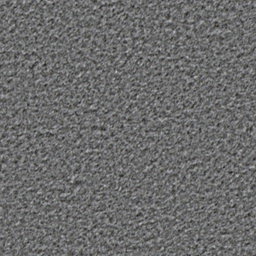}  \\
	Bicubic & WPP & SRFlow ($\tau = 0$) & SRFlow ($\tau = 0.9$) & DDRM & DPS \\  
		\includegraphics[width = \srrefwidth]{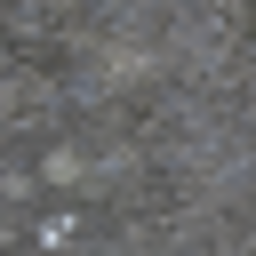} 
	& \includegraphics[width = \srrefwidth]{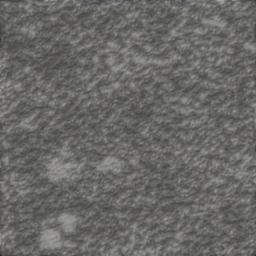} 
	& \includegraphics[width = \srrefwidth]{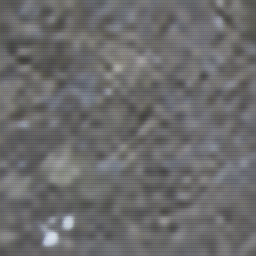} 
	& \includegraphics[width = \srrefwidth]{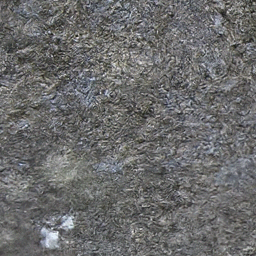} 
	& \includegraphics[width = \srrefwidth]{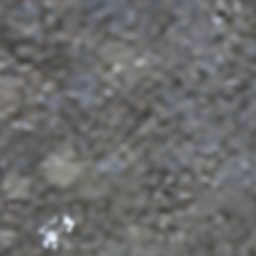}
	& \includegraphics[width = \srrefwidth]{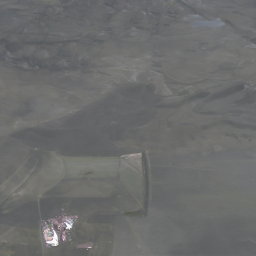}  \\
	\end{tabular}
\caption{\small \label{fig:comparaison_methods_Wall} Comparison of our method with the stochastic SRflow, DPS and DDRM and with the SISR WPP that uses the reference image (The HR size is 256$\times$256 and $r = 8$.). Note that our method is unable to provide the white spot in the wall that does not respect the stationarity assumption. However it faithfully conveys the granular aspect of the texture given by the reference image.}
\end{figure}

\begin{figure}
\scriptsize
\centering
	\begin{tabular}{@{}cccccc@{}}
	LR image & HR image & Reference image & Gaussian SR (ours) & Kriging comp. & Innovation comp. \\ 
	\includegraphics[width = \srrefwidth]{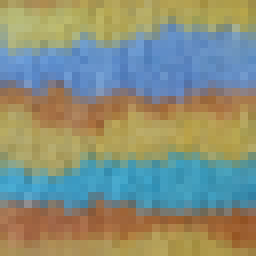} 
	& \includegraphics[width = \srrefwidth]{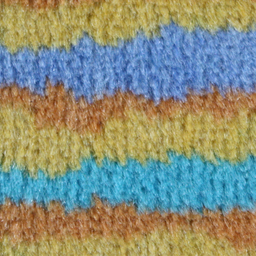} 
	& \includegraphics[width = \srrefwidth]{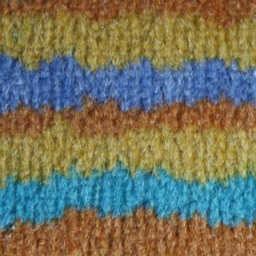} 
	& \includegraphics[width = \srrefwidth]{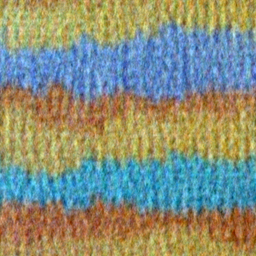} 
	& \includegraphics[width = \srrefwidth]{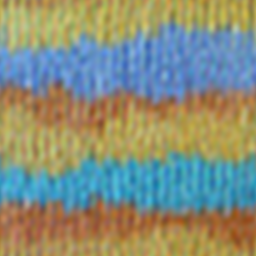}
	& \includegraphics[width = \srrefwidth]{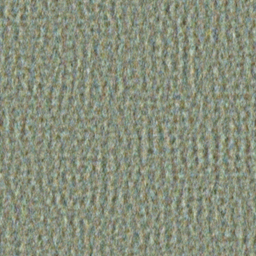}  \\
	Bicubic & WPP & SRFlow ($\tau = 0$) & SRFlow ($\tau = 0.9$) & DDRM & DPS \\  
		\includegraphics[width = \srrefwidth]{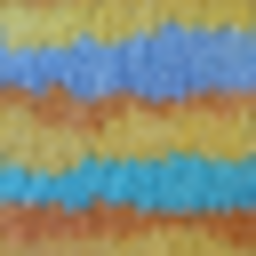} 
	& \includegraphics[width = \srrefwidth]{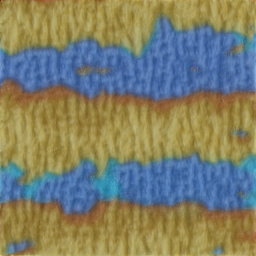} 
	& \includegraphics[width = \srrefwidth]{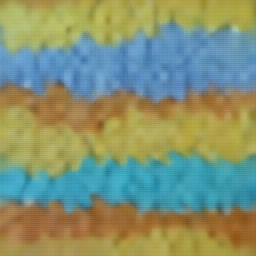} 
	& \includegraphics[width = \srrefwidth]{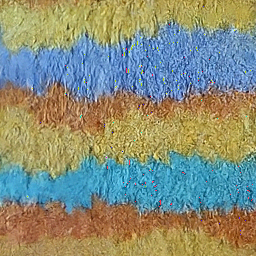} 
	& \includegraphics[width = \srrefwidth]{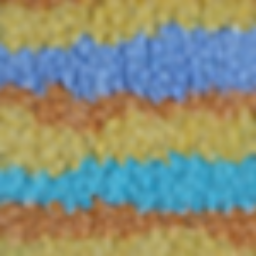}
	& \includegraphics[width = \srrefwidth]{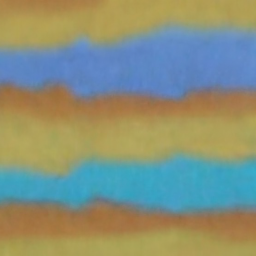}  \\
	\end{tabular}
\caption{\small \label{fig:comparaison_methods_rerA}Comparison of our methods with stochastic SRFlow, DPS and DDRM and with the SISR WPP that uses the reference image (The HR size is 256$\times$256 and $r = 8$.). 
Contrary to WPP, Gaussian SR is not influenced by the difference of luminance between the HR and the reference images.}
\end{figure}

\begin{table}
	\centering 
	{\fontsize{6.5pt}{7pt}\selectfont 
	\begin{tabular}{lllllllllll}
	\toprule
     & \multicolumn{5}{c}{Figure~\ref{fig:comparaison_methods_Wall}} & \multicolumn{5}{c}{Figure~\ref{fig:comparaison_methods_rerA}} \\
     \cmidrule(lr){2-6}
     \cmidrule(lr){7-11}
    &{\tiny PSNR $\uparrow$} & {\tiny LR-PSNR $\uparrow$}  & {\tiny SSIM $\uparrow$}  & {\tiny LPIPS $\downarrow$} & {\tiny Time} & {\tiny PSNR $\uparrow$}  & {\tiny LR-PSNR $\uparrow$} & {\tiny SSIM $\uparrow$}  & {\tiny LPIPS $\downarrow$ } &{\tiny Time}\\
    \midrule
    Gaussian SR (ours) & 17.05$\pm$0.04 &  \underline{159.24$\pm$0.04} & 0.08$\pm$0.01 & $\mathbf{0.22}$$\mathbf{\pm}$$\mathbf{0.01}$ & \textbf{0.01}$^1$  & 19.14$\pm$0.09 & $\mathbf{154.52}$$\mathbf{\pm}$$\mathbf{0.36}$ & 0.20$\pm$0.01 & \textbf{0.23$\pm$0.01} & \textbf{0.01}$^1$ \\ 
\hline
 Kriging comp. & 18.76 & $\mathbf{159.30}$ & 0.11 & 0.75 & - & 21.42 &  \underline{154.47}  & \textbf{0.30} & 0.52 & -\\
\hline
 Bicubic & 20.74 & 37.61 & \underline{0.18} & 0.76 &  -  & 20.74 &  38.35 & 0.28 & 0.66 & - \\ 
\hline
WPP & 20.04 & 29.29 & 0.14 & 0.36 & 44.0$^2$ & 17.68 & 18.84 & 0.19 & \underline{0.30} &  64.0$^2$ \\ 
\hline
SRFlow {\tiny ($\tau=0$) }& \underline{21.44} & 54.61 & \textbf{0.20} & 0.70 & 0.22$^2$ & \underline{21.64} &  51.63  & \underline{0.29} & 0.54 & 0.23$^2$ \\ 
\hline
SRFlow {\tiny ($\tau=0.9$)} & 18.29$\pm$0.36 & 55.13$\pm$0.15 & 0.12$\pm$0.01 & \underline{0.30$\pm$0.03} & \underline{0.19}$^2$   & 18.21$\pm$0.53 & 54.02$\pm$0.23 & 0.16$\pm$0.02 & 0.39$\pm$0.06 & \underline{0.20}$^2$ \\  
\hline
DDRM& $\mathbf{21.57}$$\mathbf{\pm}$$\mathbf{0.05}$ & 56.59$\pm$0.17 & $\mathbf{0.20}$$\mathbf{\pm}$$\mathbf{0.00}$ & 0.70$\pm$0.02 &  1.66$^2$   & $\mathbf{22.44}$$\mathbf{\pm}$$\mathbf{0.04}$ & 56.02$\pm$0.20 & $\mathbf{0.30}$$\mathbf{\pm}$$\mathbf{0.00}$ & 0.55$\pm$0.01 & 1.68$^2$ \\ 
\hline
DPS & 20.59$\pm$0.22 &  {\color{white}5}7.61$\pm$0.02 & 0.14$\pm$0.01 & 0.65$\pm$0.08 & 132$^2${\color{white}.} & 21.36$\pm$0.06 &  {\color{white}5}9.39$\pm$0.02 & 0.25$\pm$0.00 & 0.66$\pm$0.03 & 145$^2${\color{white}.} \\
\bottomrule
\multicolumn{8}{l}{\footnotesize $^1$CPU,$^2$GPU Nvidia A100}
\end{tabular} }
	\caption{\small \label{tab:metrics} Quantitative comparison with the state-of-the-art methods results from Figure~\ref{fig:comparaison_methods_Wall},\ref{fig:comparaison_methods_rerA} realized on $100$ samples for stochastic methods. Blurry results outperfoms on SSIM and PSNR metrics while our method provides the best perceptual LPIPS metric. Our method provides also a strong attachment to data, illustrated by the LR-PSNR metric. Our method is also faster, working on CPU.}
\end{table}

\ref{tab:metrics} shows that our method is faster and have the best LPIPS on each image, followed by WPP. Note that Gaussian SR works on CPU. Our method proposes also a strong data attachment, illustrated by the high LR-PSNR metric. It is well-known that PSNR promotes blurry images, a limitation called the ``regression to the mean'' problem \cite{Sonderby_amortised_MAP_2016_ICLR} in the SR litterature. Our image with the best PSNR comparison is the kriging component. We can explain it theoretically by the next Proposition~\ref{prop:MSE} which shows that in expectation the best PSNR is provided by the mean of our samples, the kriging component.

\begin{prop}[Kriging component and MSE]
\label{prop:MSE}
Let $\UHR \in \R^{\OMN}$ be a HR image, $\ULR = \A\UHR$ its LR version, $\La \in \R^{\OMNr \times \OMN}$ be the kriging operator and $\XSR$ the random image following the distribution of the SR samples generated with Equation~\ref{eq:sample_kriging} then
\begin{equation}
\begin{aligned}
\E_{\XSR}\left(\|\UHR - \XSR\|_2^2\right)
& = \|\UHR-\La^T\ULR\|_2^2 + \Tr\left[(\I_{\OMN}-\La^T\A)\G(\I_{\OMN}-\La^T\A)^T\right] \\
& \geq \|\UHR-\La^T\ULR\|_2^2.
\end{aligned}
\end{equation}
Simply put, the expected mean square error between the optimal HR image and Gaussian SR samples is always higher than the MSE between $\UHR$ and the associated kriging component $\La^T\ULR$.
\end{prop}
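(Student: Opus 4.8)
The plan is to recognize this as a standard bias--variance decomposition. Starting from Equation~\ref{eq:sample_kriging}, I would first rewrite the sample as
$$\XSR = \La^T\ULR + (\I_{\OMN} - \La^T\A)\tU,$$
where $\ULR = \A\UHR$ is deterministic (the HR image $\UHR$ being fixed) and $\tU \sim \mathscr{N}(\zero,\G)$ is the only source of randomness. Since $\E(\tU) = \zero$, the mean of $\XSR$ is exactly the kriging component $\La^T\ULR$, which already justifies the interpretation stated just before the proposition.

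Next I would split the error into a deterministic term and a centered random term. Setting $\m = \UHR - \La^T\ULR$ (deterministic) and $\bU = (\I_{\OMN} - \La^T\A)\tU$ (zero-mean), we have $\UHR - \XSR = \m - \bU$, hence
$$\E_{\XSR}\left(\|\UHR - \XSR\|_2^2\right) = \|\m\|_2^2 - 2\m^T\E(\bU) + \E\left(\|\bU\|_2^2\right) = \|\m\|_2^2 + \E\left(\|\bU\|_2^2\right),$$
the cross term vanishing because $\E(\bU) = (\I_{\OMN}-\La^T\A)\E(\tU) = \zero$. This is the step where choosing the kriging component as the conditional mean pays off: it annihilates the linear term, leaving a pure bias plus variance.

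For the variance term I would use the trace identity $\|\bU\|_2^2 = \Tr(\bU\bU^T)$ together with the linearity of expectation and of the trace:
$$\E\left(\|\bU\|_2^2\right) = \Tr\left(\E(\bU\bU^T)\right) = \Tr\left[(\I_{\OMN} - \La^T\A)\,\E(\tU\tU^T)\,(\I_{\OMN} - \La^T\A)^T\right],$$
and since $\tU$ is centered with covariance $\G$, one has $\E(\tU\tU^T) = \G$. Substituting $\m = \UHR - \La^T\ULR$ back then gives the claimed equality verbatim.

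Finally, the inequality is immediate: $\G$ is a covariance matrix and thus positive semidefinite, so writing $\B = \I_{\OMN} - \La^T\A$ the matrix $\B\G\B^T$ is positive semidefinite and its trace is nonnegative. I do not expect any genuine obstacle here; the computation is the textbook bias--variance identity. The only point requiring minor care is cleanly separating the deterministic quantity $\ULR = \A\UHR$ from the random innovation $\tU$ at the outset, since conflating them would spuriously reintroduce a cross term.
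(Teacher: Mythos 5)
Your proof is correct and follows essentially the same route as the paper's: both decompose the error into the deterministic bias $\UHR-\La^T\ULR$ plus the centered innovation term, observe that the cross term vanishes since $\E(\tU)=\zero$, identify the second moment with the trace of $(\I_{\OMN}-\La^T\A)\G(\I_{\OMN}-\La^T\A)^T$, and conclude by positive semi-definiteness. No gaps.
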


Proposition~\ref{prop:MSE}, proved in \ref{appendix:proof_prop_MSE}, means that adding the high frequency content of the innovation component is penalized by the PSNR since it is not aligned with the original high frequency content. Yet this addition is obviously perceptually important. We can also observe that SSIM has a similar behaviour as PSNR, as already observed for other degradation problems \cite{SSIMvsPSNR_Hore_ICPR}.
We argue that our study of Gaussian super resolution shows that LPIPS is the best metric to study the performance of stochastic SR algorithms.

Surprisingly, the three deep learning-based approaches give relatively poor results when applied to  simple textures with high frequency content. 
This suggests that Imagenet and/or DIV2K datasets should probably be enriched with texture images to adapt the models for texture SR.
Given its simplicity, computational efficiency, and qualitative superiority in comparison with state of the art SR methods, we believe our Gaussian SR algorithm is of interest for practitioners interested in stationary texture SR.

\section{Limitations and extensions}
\label{sec:limitations_and_extensions}

As shown in the previous section, when applied to Gaussian textures our Gaussian SR algorithm outperforms state-of-the-art methods regarding results quality and execution time.
However, Gaussian SR has several inherent limitations that we now highlight.

\subsection{Limited scope}

Gaussian microtextures represent a limited scope of images. The images are supposed to be stationary with no geometric structures, excluding a lot of textures such as the brick wall in Figure~\ref{fig:sr_gaussian_fails}. 
When applying our Gaussian SR algorithm to such a structured texture, 
one can observe that the independent stationary stochastic innovation component is not aligned with the brick lines recovered in the kriging component. Also, textures with too big individual objects are unadapted to the Gaussian microtexture assumption, such as the bark texture in the second row of Figure~\ref{fig:sr_gaussian_fails}. The kriging component is attached to the LR data but the innovation component provides a stationary grainy texture that is not in accordance with the piecewise flat nature of the texture. This is a well-known limitation of the ADSN model \cite{Galerne_Gousseau_Morel_random_phase_textures_2011}.

\newlength{\failwidth}
\setlength{\failwidth}{0.196\textwidth}

\begin{figure}
    \centering
    \scriptsize
\begin{tabular}{@{}ccccc@{}}
LR image & HR image & SR Gaussian sample & Kriging comp. & Innovation comp. \\
$\ULR$ & $\UHR$ & $\USR$ & $\La^T\ULR$ & $\tU-\La^T\A\tU$ \\
\includegraphics[width = \failwidth]{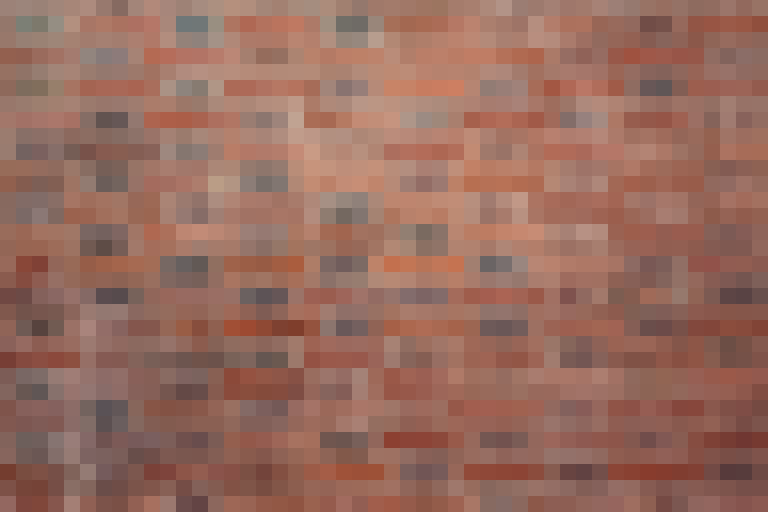}
& \includegraphics[width = \failwidth]{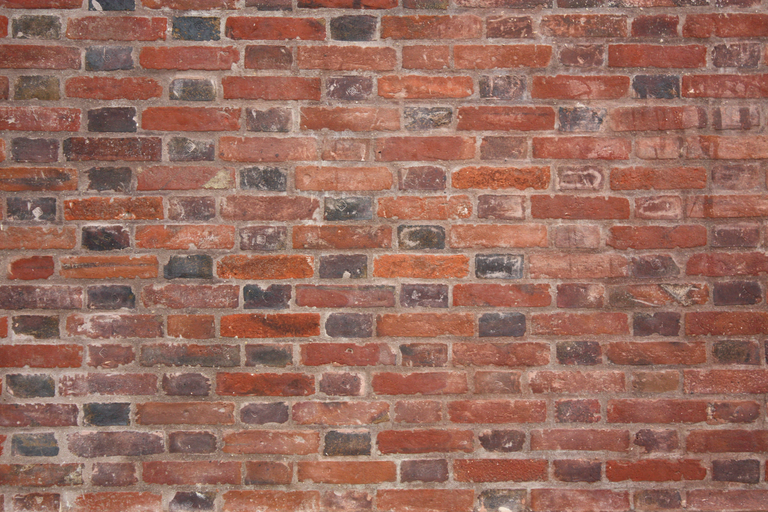}
& \includegraphics[width = \failwidth]{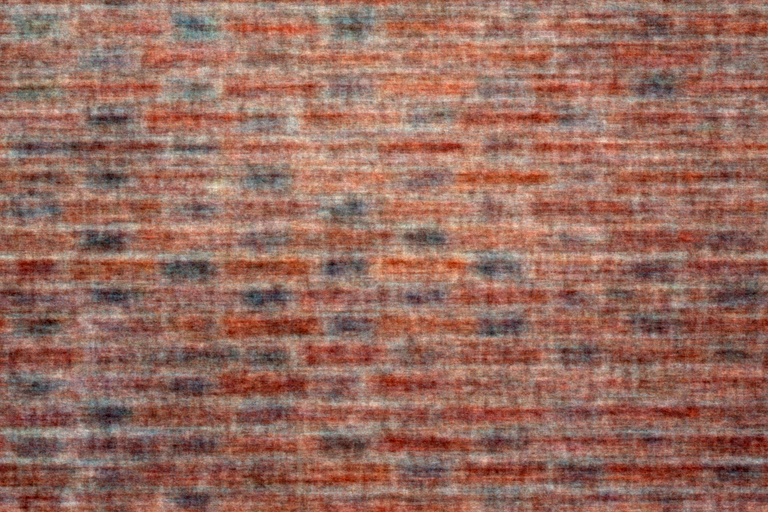}
& \includegraphics[width = \failwidth]{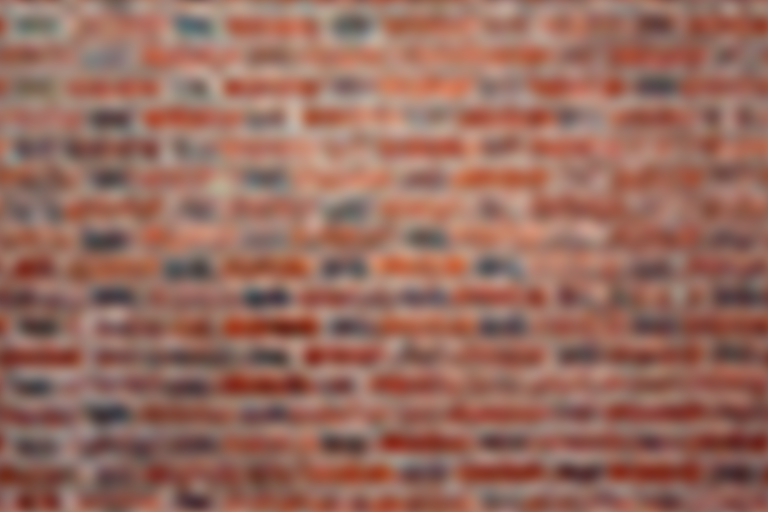}
& \includegraphics[width = \failwidth]{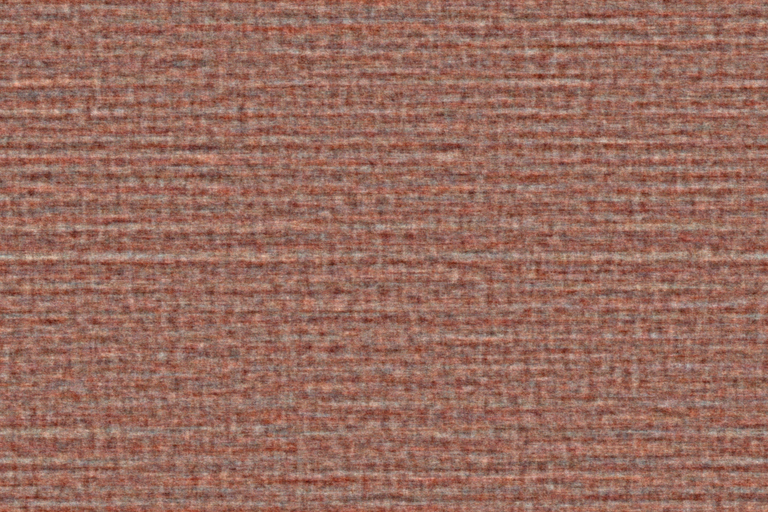} \\
\includegraphics[width = \failwidth]{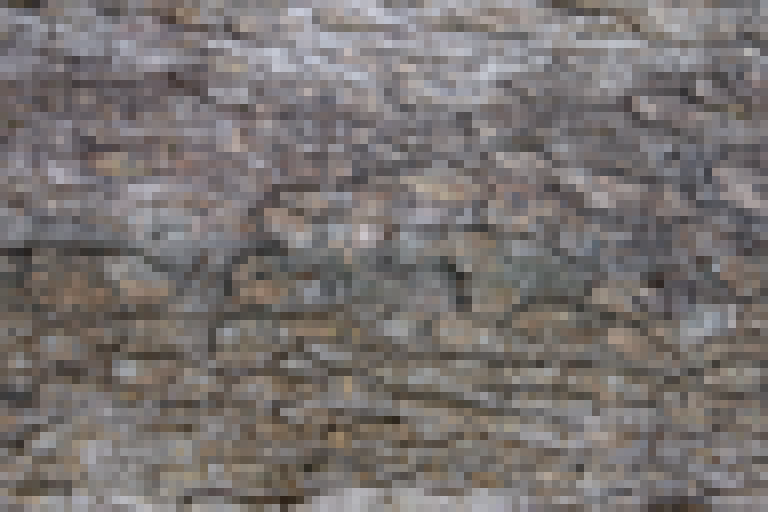}
& \includegraphics[width = \failwidth]{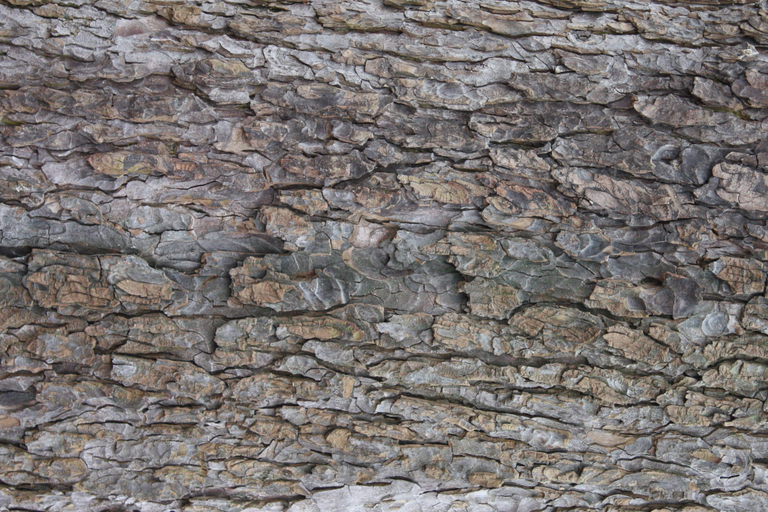}
& \includegraphics[width = \failwidth]{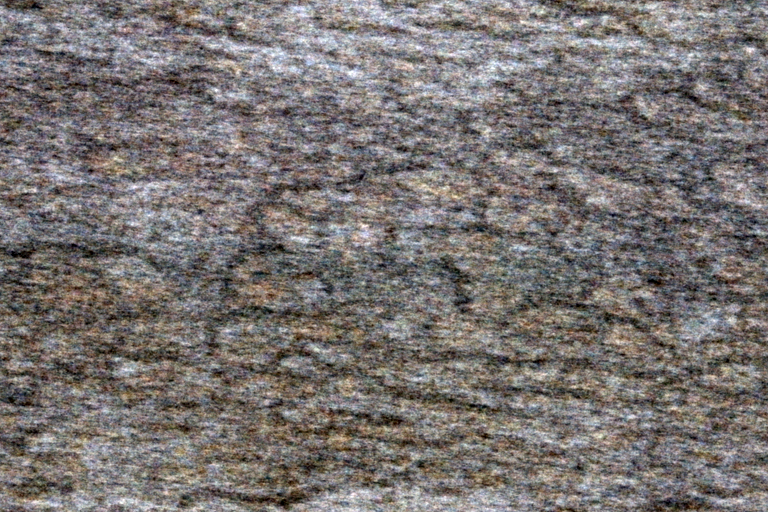}
& \includegraphics[width = \failwidth]{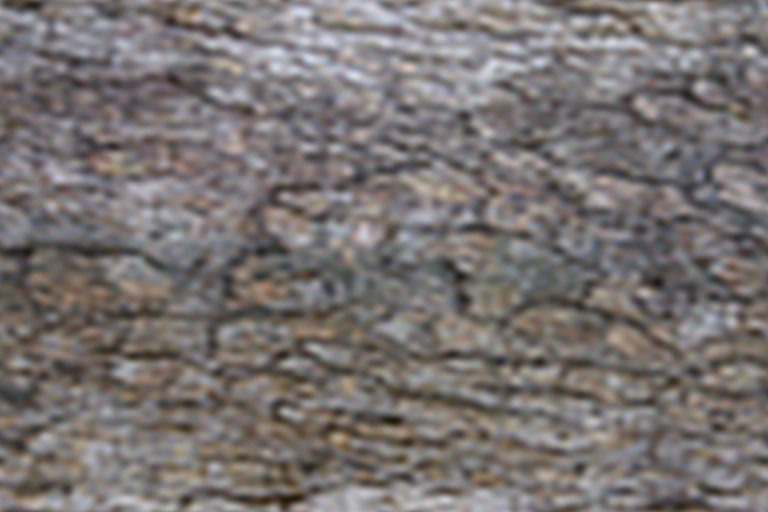}
& \includegraphics[width = \failwidth]{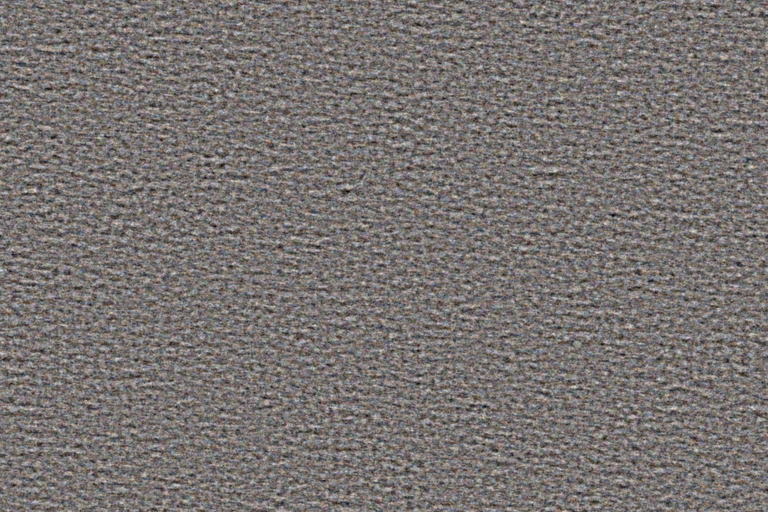}
\end{tabular}
    \caption{\small \label{fig:sr_gaussian_fails}
    Examples of the application of Gaussian SR on images that are not Gaussian microtextures (The HR size is 512$\times$768 and $r = 16$). The first image has a too structured texture to be treated as a Gaussian texture. The second image cannot be considered as a microtexture due to the presence of too big individual patterns.
    }
\end{figure}

\subsection{Non-adapted reference image}

The choice of the reference image has a strong impact on the final samples. The texture is essentially provided by the innovation component which is built with $\ADSN(\Uref)$. Figure~\ref{fig:t_ref} presents two different samples we can obtain depending on the choice of the reference image. The first reference image is the perfect HR image that retrieves the correct alignment of the texture lines. On the contrary, a reference image with opposite alignment modifies the texture of the Gaussian samples.

More generally, with a reference image, the stability result of Proposition~\ref{prop:inequality_lambda} does not hold for the simulation of the kriging component. Algorithm~\ref{algo:sampling_super_resolution_grayscale} can lead to instabilities as illustrated in Figure~\ref{fig:sr_gaussian_fails_non_adapted_ref}. Here we observe that a frequency has been intensified in the kriging component and provokes artificial parallel lines in the final sample. It is essentially due to a slight misalignment between the HR texture and the reference texture. Note that there is no artefact in the innovation component which fits the assumption of Proposition~\ref{prop:inequality_lambda}. This type of observations is marginal but can occur in some rare cases.

\begin{figure}
\scriptsize
\centering
	\begin{tabular}{@{}cccccc@{}}
		LR image & HR image & Reference image  & Sample & Kriging & Innovation \\
		\includegraphics[width =\exampleswidth]{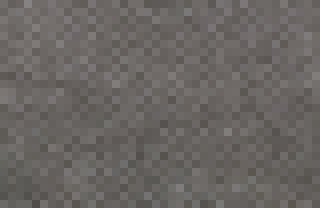}
		& \includegraphics[width = \exampleswidth]{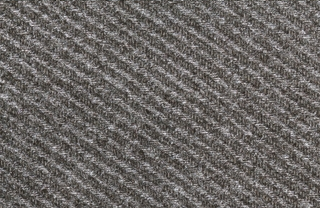}
		& \includegraphics[width = \exampleswidth]{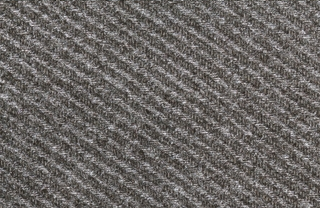}
		& \includegraphics[width = \exampleswidth]{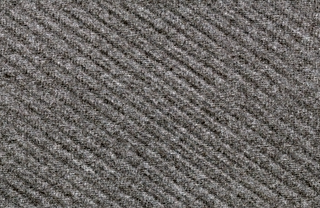}
		& \includegraphics[width = \exampleswidth]{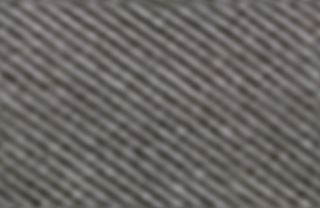}
		& \includegraphics[width = \exampleswidth]{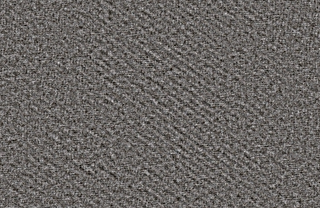} \\
				
		& 
		& \includegraphics[width = \exampleswidth]{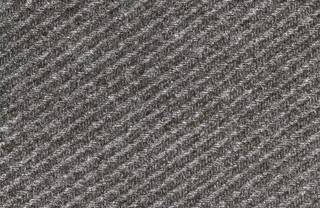}
		& \includegraphics[width = \exampleswidth]{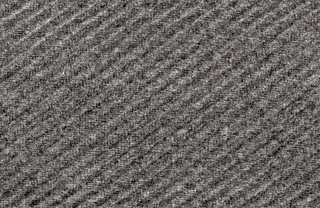}
		& \includegraphics[width = \exampleswidth]{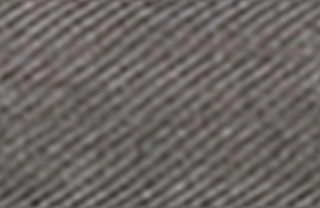}
		& \includegraphics[width = \exampleswidth]{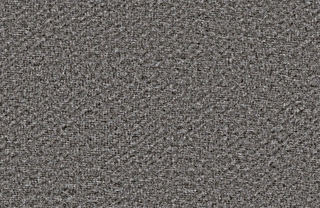} \\

	\end{tabular}
	\caption{\small \label{fig:t_ref} Illustration of the importance of the choice of the reference image. Each row presents the samples of Gaussian SR obtained with the same LR image and a different reference image (The HR size is 208$\times$320 and $r=8$.). The covariance information provided by the reference image guides the final texture sample. }
\end{figure}

\begin{figure}
    \centering
    \scriptsize
\begin{tabular}{@{}cccccc@{}}
\scriptsize
LR image & HR image & Reference image & Gaussian sample & Kriging comp. & Innovation comp. \\
$\ULR$ & $\UHR$ & $\Uref$ & $\USR$ & $\La^T\ULR$ & $\tU-\La^T\A\tU$ \\
\includegraphics[width = \exampleswidth]{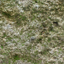}
& \includegraphics[width = \exampleswidth]{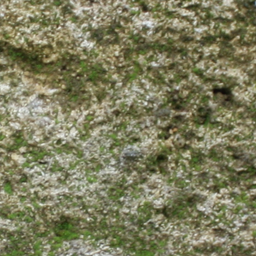}
& \includegraphics[width = \exampleswidth]{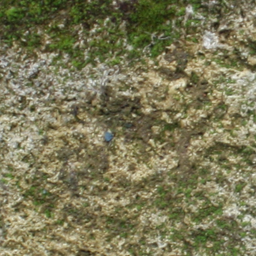}
& \includegraphics[width = \exampleswidth]{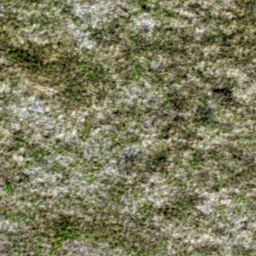}
& \includegraphics[width =\exampleswidth]{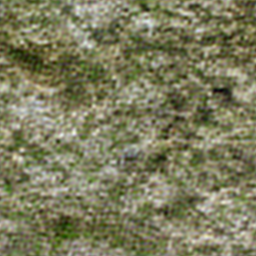}
& \includegraphics[width = \exampleswidth]{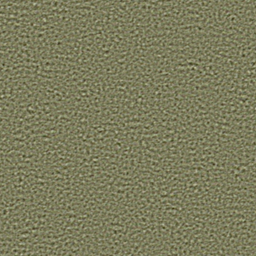} \\
\end{tabular}
    \caption{\small \label{fig:sr_gaussian_fails_non_adapted_ref}
    Example of Gaussian SR that illustrates the non-application of Proposition~\ref{prop:inequality_lambda} in the setting with reference image (HR size is 256$\times$256 and $r = 4$). A frequency has been amplified and provokes parallel lines in the kriging component generated by Gaussian SR. This artefact occurs in rare cases. This phenomenon is not observable in the innovation component which verifies the assumptions of Proposition~\ref{prop:inequality_lambda}.
    }
\end{figure}

\subsection{Non-adaptative variance}

We can study the variance of the samples generated by our method with Proposition~\ref{prop:var}. As a reminder, in our approximation $\La_{\text{approx}}^T$ is in the form $\G\A^T\bC_{\etab}$ where $\bC_{\etab}$ is a diagonal multi-chanel convolution but this is also the case with the perfect solution of Equation~\ref{eq:kriging_matrix} $\La = \G\A^T\left(\A\G\A^T\right)^\dagger$. This proposition shows that the images sampled with Gaussian SR have a constant variance on sub-grids of the image domain. The law of the SR samples $\USR$ inherits the stationarity of $\ADSN(\Uref)$ being invariant by the translations by $(kr,\ell r)$ where $k,\ell \in \Z$. In other words, the law of our SR samples is a cyclostationary law \cite{lutz_2021_cyclostationary_computer_graphics_forum}.

\begin{prop}[Non-adaptative variance of Gaussian SR]
\label{prop:var}
	Let $\La^T \in \R^{3\OMN \times 3\OMNr}$ being in the form $\bC_{\etab}\Sub^T$ where the kernels $\etab\in \R^{3\OMN}$.
	Then, the law of the SR samples generated by Equation~\ref{eq:sample_kriging} is invariant by translations by $(kr,\ell r)$ where $k,\ell \in \Z$.
\end{prop}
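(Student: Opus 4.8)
The plan is to treat $\USR$ as a Gaussian field and track how the defining expression of Equation~\ref{eq:sample_kriging} transforms under the translation $\T_{kr,\ell r}$. Writing $\USR = \La^T\ULR + (\I_{\OMN} - \La^T\A)\tU$, the first (kriging) term is deterministic and carries the mean, while the entire random fluctuation -- hence all of the variance -- comes from the innovation term $(\I_{\OMN} - \La^T\A)\tU$, where $\tU$ is a centered, stationary ADSN field. So it suffices to understand how $\I_{\OMN} - \La^T\A$ interacts with $\T_{kr,\ell r}$ and then to invoke the stationarity of $\tU$.

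The engine of the argument is a short collection of commutation relations that I would establish first. Every single- or multi-channel convolution operator commutes with all integer translations, so $\bC_{\etab}$, $\bC_{\bc}$ and $\G$ commute with both $\T_{k,\ell}$ and $\T_{kr,\ell r}$ by translation-equivariance of the periodic convolution. Next, subsampling and upsampling intertwine the LR and HR lattices through the factor $r$: one checks directly (in the spirit of Lemma~\ref{lem:convolution_subsampling}) that $\Sub\,\T_{kr,\ell r} = \T_{k,\ell}\Sub$ on $\R^{\OMN}$ and, dually, $\T_{kr,\ell r}\Sub^T = \Sub^T\T_{k,\ell}$ on $\R^{\OMNr}$. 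Combining these with $\A = \Sub\bC_{\bc}$ and the hypothesis $\La^T = \bC_{\etab}\Sub^T$ yields $\A\,\T_{kr,\ell r} = \T_{k,\ell}\A$ and $\T_{kr,\ell r}\La^T = \La^T\T_{k,\ell}$, and therefore that the composite $P := \La^T\A$ commutes with $\T_{kr,\ell r}$, i.e. $\T_{kr,\ell r}P = P\,\T_{kr,\ell r}$.

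With this in hand the conclusion is immediate. Since $\T_{kr,\ell r}$ commutes with $P$ it also commutes with $\I_{\OMN} - P$, so $\T_{kr,\ell r}(\I_{\OMN}-\La^T\A)\tU = (\I_{\OMN}-\La^T\A)\T_{kr,\ell r}\tU$; as $\tU$ is stationary, $\T_{kr,\ell r}\tU$ has the same law as $\tU$, hence the innovation field is invariant in law under $\T_{kr,\ell r}$. Equivalently, at the level of second-order statistics the covariance of $\USR$ is $\bSigma = (\I_{\OMN}-\La^T\A)\G(\I_{\OMN}-\La^T\A)^T$, and conjugating by the orthogonal operator $\T_{kr,\ell r}$, using $\T_{kr,\ell r}\G\T_{kr,\ell r}^T = \G$ together with the commutation $\T_{kr,\ell r}(\I_{\OMN}-P) = (\I_{\OMN}-P)\T_{kr,\ell r}$, gives $\T_{kr,\ell r}\bSigma\T_{kr,\ell r}^T = \bSigma$. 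In particular the pointwise variance is constant along each coset $\x_0 + (kr,\ell r)$, which is the announced constant variance on the sub-grids.

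The only delicate point is the subsampling/upsampling bookkeeping: one must carefully track the factor $r$ and the correct placement of $\Sub$ versus $\Sub^T$ when passing between the LR grid $\OMNr$ and the HR grid $\OMN$. This is exactly where the form $\La^T = \bC_{\etab}\Sub^T$ (the upsampling being outermost) is used, and it is what forces the invariance to hold only for translations that are multiples of $r$ rather than for arbitrary integer shifts. I would also remark that the deterministic kriging term transforms as $\T_{kr,\ell r}\La^T\ULR = \La^T\T_{k,\ell}\ULR$, so it is this second-order (innovation) formulation of cyclostationarity that makes the invariance precise; the mean itself need not be $\T_{kr,\ell r}$-periodic for a generic observed $\ULR$.
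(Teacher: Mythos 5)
Your proof is correct and follows essentially the same route as the paper: the paper's one-line argument ("the law of $\tX - \La^T\A\tX$ is invariant since $\tX$ is stationary and $\La^T\A\tX = \bC_{\etab}\star\Sub^T(\A\tX)$") is precisely the commutation of $\I_{\OMN}-\La^T\A$ with $\T_{kr,\ell r}$ that you make explicit via the intertwining relations $\Sub\,\T_{kr,\ell r}=\T_{k,\ell}\Sub$ and $\T_{kr,\ell r}\Sub^T=\Sub^T\T_{k,\ell}$. Your version is a more careful write-up of the same idea (and your closing remark correctly identifies that only the innovation component, hence the covariance, is translation-invariant, which is indeed all the paper's own proof establishes).
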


\noindent Proposition~\ref{prop:var} implies that the samples generated by Gaussian SR have a pixelwise variance constant by translations $(kr,\ell r)$ where $k,\ell \in \Z$. This variance can be computed theoretically and compared to the empirical pixelwise variance of the other stochastic algorithms, as done in Figure~\ref{fig:var}. The white spot detail in the textures is taking into in account in the variety of the samples proposed by the other stochastic super-resolution algorithms but not by our Gaussian SR. The variance of SRFlow is quite high and allow us to see the input image, it is clearly adapted to the LR image. DPS variance map show the white spot detail but one can guess that the texture is not really retrieved because the variance is smooth on the rest of the image. DDRM is quasi deterministic with a very low variance value. Gaussian SR presents a constant variance on sub-grids of the image.  
Indeed, one of the main limitation of the model is that the innovation component is independent of the kriging component.

\newlength{\variancewidth}
\setlength{\variancewidth}{0.196\textwidth}

\newlength{\barheight}
\setlength{\barheight}{3.2cm}

\newlength{\barwidth}
\setlength{\barwidth}{.95\variancewidth}
\begin{figure}
\centering
\scriptsize
\begin{tabular}{@{}ccccc@{}}
	HR image & Gaussian SR  & SRFlow  & DPS & DDRM \\
	\includegraphics[width=\variancewidth]{Comparison_Wall/HR}
    & \includegraphics[width=\variancewidth]{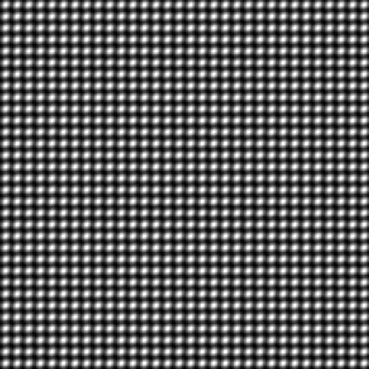}
	& \includegraphics[width=\variancewidth]{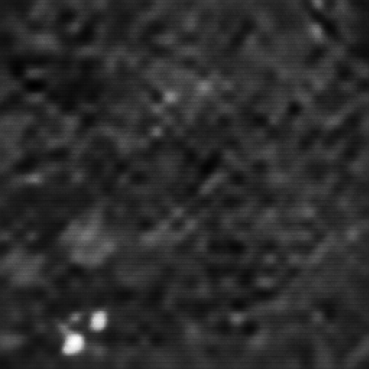}
	& \includegraphics[width=\variancewidth]{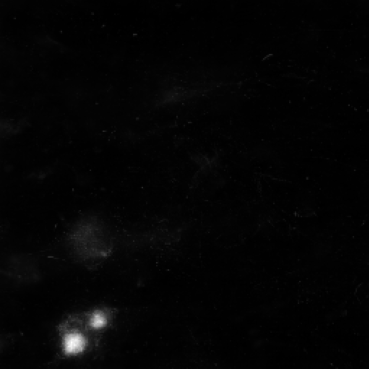} 
	& \includegraphics[width=\variancewidth]{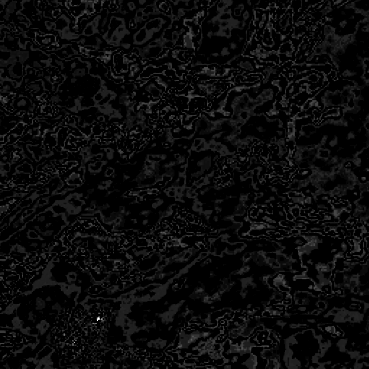}\\
	&  \includegraphics[width=\barwidth]{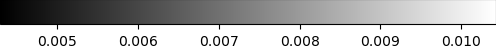} 
		& \includegraphics[width=\barwidth]{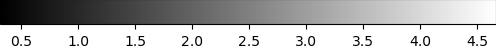}
	& \includegraphics[width=\barwidth]{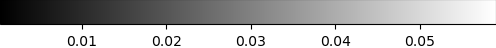}
	& \includegraphics[width=\barwidth]{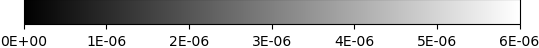}
\end{tabular}
	\caption{\small \label{fig:var} Variance maps of the different algorithms for stochastic super-resolution applied to the HR image of Figure~\ref{fig:comparaison_methods_Wall}, with $r = 8$. The variance of Gaussian SR has been computed theoretically while it has been estimated with $100$ samples for the others, and the sum of the pixelwise variance over the three chanels is displayed. While DDRM is close to being deterministic, the SRFlow and DPS algorithms have an adaptative variance to the HR image, showing the white spot of the image. In comparison, the  Gaussian SR has a cyclostationary non-adaptative variance.}
\end{figure}

\subsection{Extension to other degradation operators}

Our method can be applied to reconstruct images degraded by other degradation operators in the form $\Sub\bC$ where $\bC$ is a convolution preserving the mean. All the proofs and the properties still hold. To illustrate it, we consider motion blurs randomly generated by the code\footnote{\url{https://github.com/LeviBorodenko/motionblur}} with kernel size $61 \times 61$  and intensity value $0.5$, following \cite{Chung_DPS_ICLR_2023}. We propose to reconstruct images degraded by a motion blur followed by a subsampling operator $\Sub$ with stride $r=4$ and images degraded by a motion blur and the bicubic zoom-out operator $\A$ with factor $r=4$. These are not standard problems but they can be seen as multiple problems as studied in \cite{song_pseudoinverse_guided_diff_models_2023_ICLR}. In Figure~\ref{fig:motion_blur}, we present the results provided by our method. We can again observe the complementarity of the deterministic kriging component and the stochastic innovation component that retrieves the perceptual texture grain.
\newlength{\blurwidth}
\setlength{\blurwidth}{0.139\textwidth}

\begin{figure}
\centering 
\tiny
\begin{tabular}{@{}ccccccc@{}}
     \multicolumn{7}{c}{Motion blur followed by a subsampling operator with stride $r=4$ }\\
    \midrule
	Kernel & Degraded image & Ground truth & Reference & Kriging comp. & Innovation comp. & Sample \\
		\includegraphics[width=\blurwidth]{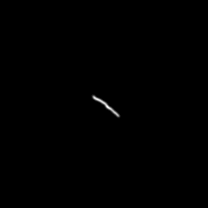} &
	\includegraphics[width=\blurwidth]{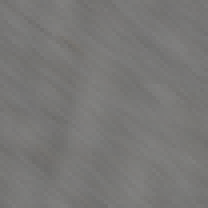} &
	\includegraphics[width=\blurwidth]{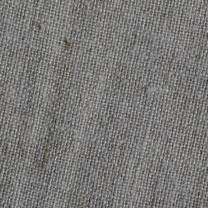} &
	\includegraphics[width=\blurwidth]{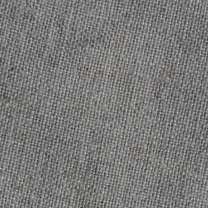} &
	\includegraphics[width=\blurwidth]{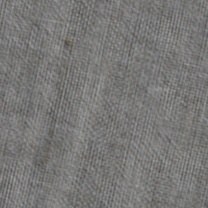} &
	\includegraphics[width=\blurwidth]{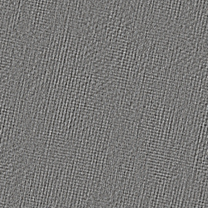} &
	\includegraphics[width=\blurwidth]{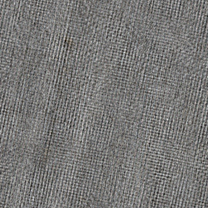} \\
	\includegraphics[width=\blurwidth]{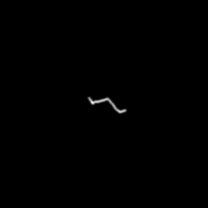} &
	\includegraphics[width=\blurwidth]{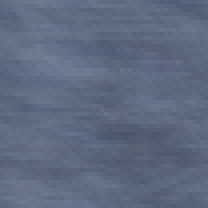} &
	\includegraphics[width=\blurwidth]{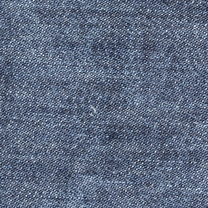} &
	\includegraphics[width=\blurwidth]{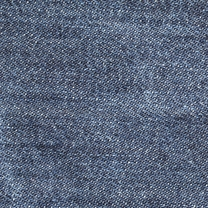} &
	\includegraphics[width=\blurwidth]{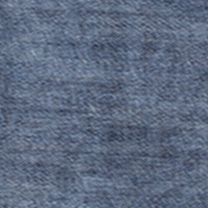} &
	\includegraphics[width=\blurwidth]{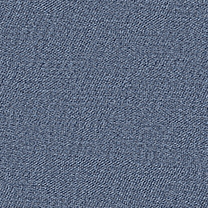} &
	\includegraphics[width=\blurwidth]{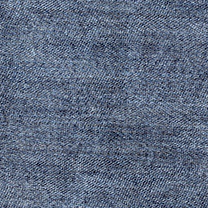} \\
	 \multicolumn{7}{c}{Motion blur followed by a bicubic convolution and a subsampling operator with stride $r=4$} \\
    \midrule
		\includegraphics[width=\blurwidth]{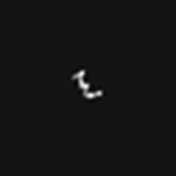} &
	\includegraphics[width=\blurwidth]{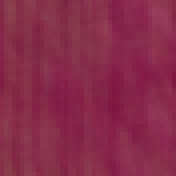} &
	\includegraphics[width=\blurwidth]{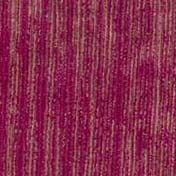} &
	\includegraphics[width=\blurwidth]{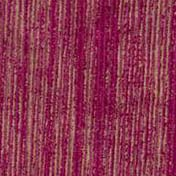} &
	\includegraphics[width=\blurwidth]{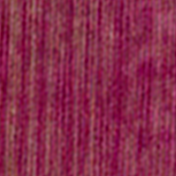} &
	\includegraphics[width=\blurwidth]{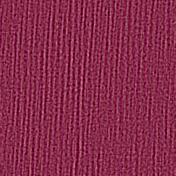} &
	\includegraphics[width=\blurwidth]{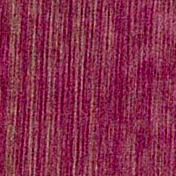} \\
			\includegraphics[width=\blurwidth]{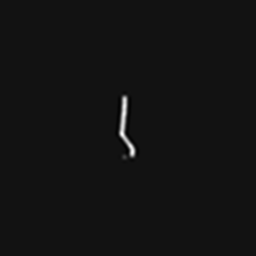} &
	\includegraphics[width=\blurwidth]{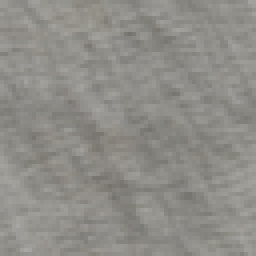} &
	\includegraphics[width=\blurwidth]{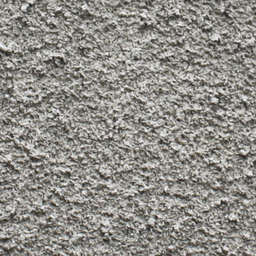} &
	\includegraphics[width=\blurwidth]{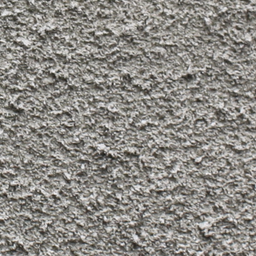} &
	\includegraphics[width=\blurwidth]{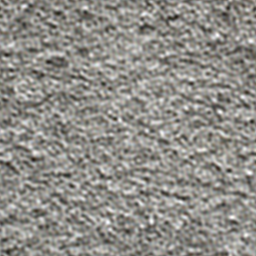} &
	\includegraphics[width=\blurwidth]{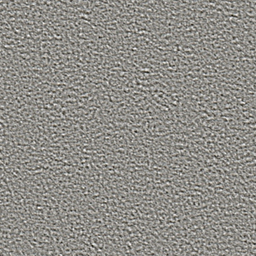} &
	\includegraphics[width=\blurwidth]{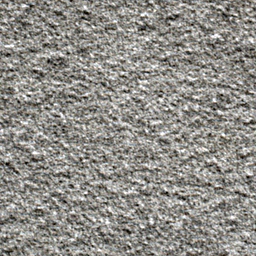} \\
	\end{tabular}
\caption{\small \label{fig:motion_blur} Illustration of the application of Gaussian kriging to other linear operators. The images have square size which are respectively $208, 208, 176$ and $256$. The two first images are degraded by a motion blur, followed by a subsampling with stride $r=4$ and the two others are degraded by a motion blur followed by the zoom-out bicubic operator with factor $r=4$.}
\end{figure}

\section{Conclusion}
\label{sec:conclusion}

This work provides an efficient sampling algorithm for solutions of the stochastic SR problem for Gaussian microtextures.
Our solution is exact for grayscale images and a we propose a fast and reliable approximation for RGB images using a well-known mathematical kriging reasoning.
Gaussian SR is a fast and qualitatively better than involved generic methods when applied to Gaussian microtextures accompanied with a reference HR image.
Besides, of a more general interest, our comparative study shows that LPIPS is a pertinent metric to evaluate stochastic SR results that do not suffer from excessive blur.
While our Gaussian SR has inherent limitations discussed in the paper, 
it could be of interest for fast and reliable restoration of HR stationary microtextures, even in the presence of motion blur.

\bibliographystyle{plain}
\bibliography{refs.bib}

\begin{thebibliography}{10}

\bibitem{div2K_Agustsson_2017_CVPR_Workshops}
Eirikur Agustsson and Radu Timofte.
\newblock Ntire 2017 challenge on single image super-resolution: Dataset and study.
\newblock In {\em The IEEE Conference on Computer Vision and Pattern Recognition (CVPR) Workshops}, July 2017.

\bibitem{Altekruger_Hertrich_WPPNets_and_WPPFlows_SIAMIS2023}
Fabian Altekr\"{u}ger and Johannes Hertrich.
\newblock {WPPN}ets and {WPPF}lows: The power of {W}asserstein patch priors for superresolution.
\newblock {\em SIAM Journal on Imaging Sciences}, 16(3):1033--1067, 2023.

\bibitem{bahat2020explorable}
Yuval Bahat and Tomer Michaeli.
\newblock Explorable super resolution.
\newblock In {\em Proceedings of the IEEE/CVF Conference on Computer Vision and Pattern Recognition}, pages 2716--2725, 2020.

\bibitem{Bruna_Sprechmann_LeCun_super_resolution_with_deep_convolutional_sufficient_statistics_ICLR2016}
Joan Bruna, Pablo Sprechmann, and Yann LeCun.
\newblock Super-resolution with deep convolutional sufficient statistics.
\newblock In Yoshua Bengio and Yann LeCun, editors, {\em 4th International Conference on Learning Representations, {ICLR} 2016, San Juan, Puerto Rico, May 2-4, 2016, Conference Track Proceedings}, 2016.

\bibitem{Chang_et_al_SR_embedding_2004_CVPR}
Hong Chang, Dit-Yan Yeung, and Yimin Xiong.
\newblock Super-resolution through neighbor embedding.
\newblock In {\em Proceedings of the 2004 IEEE Computer Society Conference on Computer Vision and Pattern Recognition, 2004. CVPR 2004.}, volume~1, pages I--I, 2004.

\bibitem{chatillon_statistically_2022}
Pierrick Chatillon, Yann Gousseau, and Sidonie Lefebvre.
\newblock A statistically constrained internal method for single image super-resolution.
\newblock In {\em 26th {International} {Conference} on {Pattern} {Recognition}, {ICPR} 2022, {Montreal}, {QC}, {Canada}, {August} 21-25, 2022}, pages 1322--1328. IEEE, 2022.

\bibitem{Chung_DPS_ICLR_2023}
Hyungjin Chung, Jeongsol Kim, Michael~Thompson Mccann, Marc~Louis Klasky, and Jong~Chul Ye.
\newblock Diffusion posterior sampling for general noisy inverse problems.
\newblock In {\em The Eleventh International Conference on Learning Representations}, 2023.

\bibitem{imagenet_deng_2009}
Jia Deng, Wei Dong, Richard Socher, Li-Jia Li, Kai Li, and Li~Fei-Fei.
\newblock Imagenet: A large-scale hierarchical image database.
\newblock In {\em 2009 IEEE conference on computer vision and pattern recognition}, pages 248--255. Ieee, 2009.

\bibitem{Desolneux_Moisan_Ronsin_texton_icassp2012}
A.~Desolneux, L.~Moisan, and S.~Ronsin.
\newblock A compact representation of random phase and {G}aussian textures.
\newblock In {\em ICASSP'12}, pages 1381--1384, 2012.

\bibitem{Dhariwal_et_al_Diffusion_beats_gan_Neurips_2021}
Prafulla Dhariwal and Alexander Nichol.
\newblock Diffusion models beat gans on image synthesis.
\newblock In M.~Ranzato, A.~Beygelzimer, Y.~Dauphin, P.S. Liang, and J.~Wortman Vaughan, editors, {\em Advances in Neural Information Processing Systems}, volume~34, pages 8780--8794. Curran Associates, Inc., 2021.

\bibitem{Dong_SRCNN_2014_ECCV}
Chao Dong, Chen~Change Loy, Kaiming He, and Xiaoou Tang.
\newblock Learning a deep convolutional network for image super-resolution.
\newblock In {\em European Conference on Computer Vision}, 2014.

\bibitem{doob_stochastic_1953}
J.L. Doob.
\newblock {\em Stochastic Processes}.
\newblock Probability and Statistics Series. Wiley, 1953.

\bibitem{Freedman_et_al_Image_upscaling_2010_ACM}
Gilad Freedman and Raanan Fattal.
\newblock Image and video upscaling from local self-examples.
\newblock {\em ACM Trans. Graph.}, 28(3):1--10, 2010.

\bibitem{Freeman_et_al_example_based_SR_2002_IEEE}
W.T. Freeman, T.R. Jones, and E.C. Pasztor.
\newblock Example-based super-resolution.
\newblock {\em IEEE Computer Graphics and Applications}, 22(2):56--65, 2002.

\bibitem{GLLD_Gabor_noise_by_example_2012}
B.~Galerne, A.~Lagae, S.~Lefebvre, and G.~Drettakis.
\newblock {G}abor noise by example.
\newblock {\em ACM Trans. Graph.}, 31(4):73:1--73:9, jul 2012.

\bibitem{Galerne_Leclaire_Moisan_SOTexton_2014}
B.~Galerne, A.~Leclaire, and L.~Moisan.
\newblock A texton for fast and flexible {G}aussian texture synthesis.
\newblock In {\em Proceedings of the 22nd European Signal Processing Conference (EUSIPCO)}, pages 1686--1690, 2014.

\bibitem{Galerne_Leclaire_Moisan_texton_noise_2017}
B.~Galerne, A.~Leclaire, and L.~Moisan.
\newblock Texton noise.
\newblock {\em Computer Graphics Forum}, 2017.

\bibitem{Galerne_Leclaire_Rabin_semi_discrete_texture_synthesis_SIIMS2018}
B.~Galerne, A.~Leclaire, and J.~Rabin.
\newblock A texture synthesis model based on semi-discrete optimal transport in patch space.
\newblock {\em SIAM Journal on Imaging Sciences}, 11(4):2456--2493, 2018.

\bibitem{Galerne_Gousseau_Morel_random_phase_textures_2011}
Bruno Galerne, Yann Gousseau, and Jean-Michel Morel.
\newblock Random {Phase} {Textures}: {Theory} and {Synthesis}.
\newblock {\em IEEE Transactions on Image Processing}, 20(1):257--267, 2011.

\bibitem{Galerne_Leclaire_Gaussian_inpainting_ipol_2017}
Bruno Galerne and Arthur Leclaire.
\newblock {An Algorithm for Gaussian Texture Inpainting}.
\newblock {\em {Image Processing On Line}}, 7:262--277, 2017.
\newblock \url{https://doi.org/10.5201/ipol.2017.198}.

\bibitem{Galerne_Leclaire_gaussian_inpainting_siims2017}
Bruno Galerne and Arthur Leclaire.
\newblock Texture {Inpainting} {Using} {Efficient} {Gaussian} {Conditional} {Simulation}.
\newblock {\em SIAM Journal on Imaging Sciences}, 10(3):1446--1474, 2017.

\bibitem{Galerne_Leclaire_Moisan_microtexture_inpainting_icassp2016}
Bruno Galerne, Arthur Leclaire, and Lionel Moisan.
\newblock Microtexture inpainting through {Gaussian} conditional simulation.
\newblock In {\em 2016 {IEEE} {International} {Conference} on {Acoustics}, {Speech} and {Signal} {Processing} ({ICASSP})}, pages 1204--1208, 2016.

\bibitem{Goodfellow_et_al_2014}
Ian Goodfellow, Jean Pouget-Abadie, Mehdi Mirza, Bing Xu, David Warde-Farley, Sherjil Ozair, Aaron Courville, and Yoshua Bengio.
\newblock Generative {Adversarial} {Nets}.
\newblock In Z.~Ghahramani, M.~Welling, C.~Cortes, N.~Lawrence, and K.~Q. Weinberger, editors, {\em Advances in {Neural} {Information} {Processing} {Systems}}, volume~27. Curran Associates, Inc., 2014.

\bibitem{Gutierrez_Galerne_Rabin_Hurtut_optimal_patch_assignment_texture_synth_ssvm2017}
J.~Gutierrez, B.~Galerne, J.~Rabin, and T.~Hurtut.
\newblock Optimal patch assignment for statistically constrained texture synthesis.
\newblock In F.~Lauze, Y.~Dong, and A.~B. Dahl, editors, {\em Scale Space and Variational Methods in Computer Vision: 6th International Conference, SSVM 2017, Kolding, Denmark, June 4-8, 2017, Proceedings}, pages 172--183, Cham, 2017. Springer International Publishing.

\bibitem{Hertrich_et_al_Wasserstein_patch_prior_superresolution_IEEETCI2022}
Johannes Hertrich, Antoine Houdard, and Claudia Redenbach.
\newblock Wasserstein {Patch} {Prior} for {Image} {Superresolution}.
\newblock {\em IEEE Transactions on Computational Imaging}, 8:693--704, 2022.

\bibitem{Ho_DDPM_2020_Neurips}
Jonathan Ho, Ajay Jain, and Pieter Abbeel.
\newblock Denoising diffusion probabilistic models.
\newblock In Hugo Larochelle, Marc'Aurelio Ranzato, Raia Hadsell, Maria{-}Florina Balcan, and Hsuan{-}Tien Lin, editors, {\em Advances in Neural Information Processing Systems 33: Annual Conference on Neural Information Processing Systems 2020, NeurIPS 2020, December 6-12, 2020, virtual}, 2020.

\bibitem{SSIMvsPSNR_Hore_ICPR}
Alain Horé and Djemel Ziou.
\newblock Image quality metrics: Psnr vs. ssim.
\newblock In {\em 2010 20th International Conference on Pattern Recognition}, pages 2366--2369, 2010.

\bibitem{Houdard_etal_Wasserstein_generative_models_for_patch_based_texture_synthesis_SSVM2021}
Antoine Houdard, Arthur Leclaire, Nicolas Papadakis, and Julien Rabin.
\newblock Wasserstein generative models for patch-based texture synthesis.
\newblock In Abderrahim Elmoataz, Jalal Fadili, Yvain Qu{\'e}au, Julien Rabin, and Lo{\"i}c Simon, editors, {\em Scale Space and Variational Methods in Computer Vision}, pages 269--280, Cham, 2021. Springer International Publishing.

\bibitem{Houdard_etal_generative_model_for_texture_synthesis_OT_feature_distribution_IJCV2023}
Antoine Houdard, Arthur Leclaire, Nicolas Papadakis, and Julien Rabin.
\newblock A generative model for texture synthesis based on optimal transport between feature distributions.
\newblock {\em J. Math. Imaging Vis.}, 65(1):4--28, 2023.

\bibitem{Jiang_CEM_2021_CVF}
H.~Jiang, Y.~Shen, J.~Xie, J.~Li, J.~Qian, and J.~Yang.
\newblock Sampling network guided cross-entropy method for unsupervised point cloud registration.
\newblock In {\em 2021 IEEE/CVF International Conference on Computer Vision (ICCV)}, pages 6108--6117, Los Alamitos, CA, USA, oct 2021. IEEE Computer Society.

\bibitem{Johnson_etal_perceptual_losses_for_real-time_style_transfer_and_super-resolution_ECCV2016}
Justin Johnson, Alexandre Alahi, and Li~Fei-Fei.
\newblock Perceptual losses for real-time style transfer and super-resolution.
\newblock In Bastian Leibe, Jiri Matas, Nicu Sebe, and Max Welling, editors, {\em Computer Vision -- ECCV 2016}, pages 694--711, Cham, 2016. Springer International Publishing.

\bibitem{Kamilov_etal_PnP_methods_for_integrating_physical_and_learned_models_in_computational_imaging_IEEESPM2023}
Ulugbek~S. Kamilov, Charles~A. Bouman, Gregery~T. Buzzard, and Brendt Wohlberg.
\newblock Plug-and-play methods for integrating physical and learned models in computational imaging: Theory, algorithms, and applications.
\newblock {\em IEEE Signal Processing Magazine}, 40(1):85--97, 2023.

\bibitem{Kammerer_convergence_CGD_1972_SIAM}
W.~J. Kammerer and M.~Z. Nashed.
\newblock On the convergence of the conjugate gradient method for singular linear operator equations.
\newblock {\em SIAM Journal on Numerical Analysis}, 9(1):165--181, 1972.

\bibitem{kawar_DDRM_ICLR_2022}
Bahjat Kawar, Michael Elad, Stefano Ermon, and Jiaming Song.
\newblock Denoising diffusion restoration models.
\newblock In {\em ICLR Workshop on Deep Generative Models for Highly Structured Data}, 2022.

\bibitem{Kim_Accurate_image_SR_2016_CVPR}
Jiwon Kim, Jung~Kwon Lee, and Kyoung~Mu Lee.
\newblock Accurate image super-resolution using very deep convolutional networks.
\newblock In {\em 2016 IEEE Conference on Computer Vision and Pattern Recognition (CVPR)}, pages 1646--1654, 2016.

\bibitem{Kingma_Dhariwal_glow_NEURIPS_2018}
Durk~P Kingma and Prafulla Dhariwal.
\newblock Glow: {Generative} {Flow} with {Invertible} 1x1 {Convolutions}.
\newblock In S.~Bengio, H.~Wallach, H.~Larochelle, K.~Grauman, N.~Cesa-Bianchi, and R.~Garnett, editors, {\em Advances in {Neural} {Information} {Processing} {Systems}}, volume~31. Curran Associates, Inc., 2018.

\bibitem{Wei_Sheng_Deep_Laplacian_pyramidal_SR_2017_CVPR}
Wei-Sheng Lai, Jia-Bin Huang, Narendra Ahuja, and Ming-Hsuan Yang.
\newblock Deep laplacian pyramid networks for fast and accurate super-resolution.
\newblock In {\em 2017 IEEE Conference on Computer Vision and Pattern Recognition (CVPR)}, pages 5835--5843, 2017.

\bibitem{lantuejoul_geostatistical_2001}
C.~Lantuejoul.
\newblock {\em Geostatistical Simulation: Models and Algorithms}.
\newblock Collezione legale Pirola. Springer Berlin Heidelberg, 2001.

\bibitem{Ledig_et_al_Photo_realistic_GAN_2017_CVPR}
Christian Ledig, Lucas Theis, Ferenc Huszár, Jose Caballero, Andrew Cunningham, Alejandro Acosta, Andrew Aitken, Alykhan Tejani, Johannes Totz, Zehan Wang, and Wenzhe Shi.
\newblock Photo-{Realistic} {Single} {Image} {Super}-{Resolution} {Using} a {Generative} {Adversarial} {Network}.
\newblock In {\em 2017 {IEEE} {Conference} on {Computer} {Vision} and {Pattern} {Recognition} ({CVPR})}, pages 105--114, 2017.

\bibitem{Lim_enhanced_network_2017_CVPR_Workshops}
Bee Lim, Sanghyun Son, Heewon Kim, Seungjun Nah, and Kyoung~Mu Lee.
\newblock Enhanced deep residual networks for single image super-resolution.
\newblock In {\em The IEEE Conference on Computer Vision and Pattern Recognition (CVPR) Workshops}, July 2017.

\bibitem{Liu_VAE_Reference_based_2021_CVPR}
Z.~Liu, W.~Siu, and L.~Wang.
\newblock Variational autoencoder for reference based image super-resolution.
\newblock In {\em 2021 IEEE/CVF Conference on Computer Vision and Pattern Recognition Workshops (CVPRW)}, pages 516--525, Los Alamitos, CA, USA, jun 2021. IEEE Computer Society.

\bibitem{Lugmayr_et_al_2020_SRFlow_ECCV}
Andreas Lugmayr, Martin Danelljan, Luc Van~Gool, and Radu Timofte.
\newblock {SRFlow}: {Learning} the {Super}-{Resolution} {Space} with {Normalizing} {Flow}.
\newblock In {\em Computer {Vision} – {ECCV} 2020: 16th {European} {Conference}, {Glasgow}, {UK}, {August} 23–28, 2020, {Proceedings}, {Part} {V}}, pages 715--732, Berlin, Heidelberg, 2020. Springer-Verlag.
\newblock event-place: Glasgow, United Kingdom.

\bibitem{lutz_2021_cyclostationary_computer_graphics_forum}
N.~Lutz, B.~Sauvage, and J-M. Dischler.
\newblock Cyclostationary gaussian noise: theory and synthesis.
\newblock {\em Computer Graphics Forum}, 40:xx--yy, 2021.

\bibitem{Mignon_et_al_Semi_unbalanced_OT_2023_Eusipco}
Simon Mignon, Bruno Galerne, Moncef Hidane, C{\'{e}}cile Louchet, and Julien Mille.
\newblock Semi-unbalanced regularized optimal transport for image restoration.
\newblock In {\em 31st European Signal Processing Conference, {EUSIPCO} 2023, Helsinki, Finland, September 4-8, 2023}, pages 466--470. {IEEE}, 2023.

\bibitem{Mignon_et_al_Semi_unbalanced_OT_2024_HAL}
Simon Mignon, Bruno Galerne, Moncef Hidane, C{\'e}cile Louchet, and Julien Mille.
\newblock Semi-unbalanced optimal transport for image restoration and synthesis.
\newblock working paper or preprint, 2024.

\bibitem{moisan_periodic_2011}
Lionel Moisan.
\newblock Periodic {Plus} {Smooth} {Image} {Decomposition}.
\newblock {\em Journal of Mathematical Imaging and Vision}, 39(2):161--179, 2011.

\bibitem{Ongie_etal_deep_learning_techniques_for_inverse_problems_in_imaging_2020}
Gregory Ongie, Ajil Jalal, Christopher~A. Metzler, Richard~G. Baraniuk, Alexandros~G. Dimakis, and Rebecca Willett.
\newblock Deep learning techniques for inverse problems in imaging.
\newblock {\em IEEE Journal on Selected Areas in Information Theory}, 1(1):39--56, 2020.

\bibitem{penrose_1955}
R.~Penrose.
\newblock A generalized inverse for matrices.
\newblock {\em Mathematical Proceedings of the Cambridge Philosophical Society}, 51(3):406–413, 1955.

\bibitem{Pierret_Galerne_stochastic_SR_for_gaussian_textures_ICASSP2023}
\'Emile Pierret and Bruno Galerne.
\newblock Stochastic super-resolution for gaussian textures.
\newblock In {\em ICASSP 2023 - 2023 IEEE International Conference on Acoustics, Speech and Signal Processing (ICASSP)}, pages 1--5, 2023.

\bibitem{Rout_etal_solving_linear_inverse_problems_latent_diffusion_models_NeurIPS2023}
Litu Rout, Negin Raoof, Giannis Daras, Constantine Caramanis, Alex Dimakis, and Sanjay Shakkottai.
\newblock Solving linear inverse problems provably via posterior sampling with latent diffusion models.
\newblock In {\em Thirty-seventh Conference on Neural Information Processing Systems}, 2023.

\bibitem{Saharia_et_al_SR3_2021_IEEE}
Chitwan Saharia, Jonathan Ho, William Chan, Tim Salimans, David~J. Fleet, and Mohammad Norouzi.
\newblock Image {Super}-{Resolution} {Via} {Iterative} {Refinement}.
\newblock {\em IEEE Transactions on Pattern Analysis and Machine Intelligence}, pages 1--14, 2022.

\bibitem{Sajjadi_enhancenet_2017_ICCV}
Mehdi S.~M. Sajjadi, Bernhard Sch{\"o}lkopf, and Michael Hirsch.
\newblock {EnhanceNet: Single Image Super-Resolution through Automated Texture Synthesis}.
\newblock In {\em Computer Vision (ICCV), 2017 IEEE International Conference on}, pages 4501--4510. IEEE, 2017.

\bibitem{Simonyan_Zisserman_very_deep_cnn_vgg_ICLR2015}
Karen Simonyan and Andrew Zisserman.
\newblock Very deep convolutional networks for large-scale image recognition.
\newblock In Yoshua Bengio and Yann LeCun, editors, {\em 3rd International Conference on Learning Representations, {ICLR} 2015, San Diego, CA, USA, May 7-9, 2015, Conference Track Proceedings}, 2015.

\bibitem{song_pseudoinverse_guided_diff_models_2023_ICLR}
Jiaming Song, Arash Vahdat, Morteza Mardani, and Jan Kautz.
\newblock Pseudoinverse-guided diffusion models for inverse problems.
\newblock In {\em International Conference on Learning Representations (ICLR)}, May 2023.

\bibitem{song_score-based_2023}
Yang Song, Jascha Sohl-Dickstein, Diederik~P. Kingma, Abhishek Kumar, Stefano Ermon, and Ben Poole.
\newblock Score-{Based} {Generative} {Modeling} through {Stochastic} {Differential} {Equations}.
\newblock 2023.

\bibitem{Sonderby_amortised_MAP_2016_ICLR}
C.~Sønderby, J.~Caballero, L.~Theis, W.~Shi, and F.~Huszár.
\newblock Amortised map inference for image super-resolution.
\newblock In {\em International Conference on Learning Representations}, 2017.

\bibitem{Ulyanov_etal_Deep_image_prior_CVPR2018}
Dmitry Ulyanov, Andrea Vedaldi, and Victor Lempitsky.
\newblock Deep image prior.
\newblock In {\em Proceedings of the IEEE Conference on Computer Vision and Pattern Recognition (CVPR)}, June 2018.

\bibitem{Ulyanov_etal_Deep_image_prior_IJCV2020}
Dmitry Ulyanov, Andrea Vedaldi, and Victor~S. Lempitsky.
\newblock Deep image prior.
\newblock {\em Int. J. Comput. Vis.}, 128(7):1867--1888, 2020.

\bibitem{wang_recovering_realistic_texture_2018_cvpr}
Xintao Wang, Ke~Yu, Chao Dong, and Chen~Change Loy.
\newblock Recovering realistic texture in image super-resolution by deep spatial feature transform.
\newblock In {\em The IEEE Conference on Computer Vision and Pattern Recognition (CVPR)}, June 2018.

\bibitem{Wang_et_al_2018_ESRGAN}
Xintao Wang, Ke~Yu, Shixiang Wu, Jinjin Gu, Yihao Liu, Chao Dong, Yu~Qiao, and Chen~Change Loy.
\newblock {ESRGAN}: {Enhanced} {Super}-{Resolution} {Generative} {Adversarial} {Networks}.
\newblock In Laura Leal-Taixé and Stefan Roth, editors, {\em Computer {Vision} – {ECCV} 2018 {Workshops}}, pages 63--79, Cham, 2019. Springer International Publishing.

\bibitem{wang_image_2004}
Zhou Wang, A.C. Bovik, H.R. Sheikh, and E.P. Simoncelli.
\newblock Image quality assessment: from error visibility to structural similarity.
\newblock {\em IEEE Transactions on Image Processing}, 13(4):600--612, 2004.

\bibitem{Xia_Ferradans_Peyre_Aujol_synthesizing_mixing_Gaussian_texture_models_2014}
G.-S. Xia, S.~Ferradans, G.~Peyr\'e, and J.-F. Aujol.
\newblock Synthesizing and mixing stationary {G}aussian texture models.
\newblock {\em SIAM J. on Imaging Science}, 8(1):476--508, 2014.

\bibitem{Yu_Sapiro_Mallat_PLE_GMM_2012}
G.~Yu, G.~Sapiro, and S.~Mallat.
\newblock Solving inverse problems with piecewise linear estimators: From {G}aussian mixture models to structured sparsity.
\newblock {\em Image Processing, IEEE Transactions on}, 21(5):2481--2499, May 2012.

\bibitem{zhang_unreasonable_2018}
R.~Zhang, P.~Isola, A.~A. Efros, E.~Shechtman, and O.~Wang.
\newblock The {Unreasonable} {Effectiveness} of {Deep} {Features} as a {Perceptual} {Metric}.
\newblock In {\em 2018 {IEEE}/{CVF} {Conference} on {Computer} {Vision} and {Pattern} {Recognition} ({CVPR})}, pages 586--595, Los Alamitos, CA, USA, June 2018. IEEE Computer Society.

\bibitem{Zheng_et_al_Crossnet__2018_ECCV}
Haitian Zheng, Mengqi Ji, Haoqian Wang, Yebin Liu, and Lu~Fang.
\newblock Crossnet: An end-to-end reference-based super resolution network using cross-scale warping.
\newblock In Vittorio Ferrari, Martial Hebert, Cristian Sminchisescu, and Yair Weiss, editors, {\em Computer Vision -- ECCV 2018}, pages 87--104, Cham, 2018. Springer International Publishing.

\bibitem{Zoran_Weiss_from_learning_models_of_natural_image_patches_to_whole_image_restoration_IVVC2011}
Daniel Zoran and Yair Weiss.
\newblock From learning models of natural image patches to whole image restoration.
\newblock In {\em 2011 International Conference on Computer Vision}, pages 479--486, 2011.

\end{thebibliography}

\appendix

\section{Proof of Theorem~\ref{thm:kriging}}

\label{appendix:proof_kriging_equation}

\noindent The fact that the conditional expectation is linear for zero-mean Gaussian multivariate law is a classical result (see e.g. \cite{doob_stochastic_1953}). Let us denote $(\la(x))_{x \in \OMN}$ the columns of $\La$.
$\La$ minimizes
$$
\begin{aligned}
    \E\left(\left\|\La^T\A\X -\X\right\|_2^2\right)
    & = \E\left[\sum_{x \in \OMN}\left((\La^T \A\X)(x)-\X(x)\right)^2\right] \\
    & = \sum_{x \in  \OMN}\E\left[\left(\la(x)^T\A\X-\X(x)\right)^2\right].
\end{aligned}
$$
This is a separable function of the columns $(\la(x))_{x \in \OMN}$. 
For each for $x \in  \OMN$, let us minimize 
$\la(x)~\in~\R^{\OMNr}~\mapsto~\E\left[\left(\la(x)^T\A\X-\X(x)\right)^2\right]$.
One has
$$
\begin{aligned}
    \E\left[\left(\la(x)^T\A\X-\X(x)\right)^2\right]
    & = \Var(\la(x)^T \A \X) + \Var(\X(x)) -2 \Cov(\la(x)^T \A\X,\X(x)) \\
    & =\la(x)^T \A\G \A^T \la(x) + \Var(\X(x)) -2\la(x)^T\A \Cov(\X,\X(x)).
\end{aligned}
$$
This is a quadratic functional associated with the positive matrix $2 \A\G \A^T$ and the vector $2\A\Cov(\X,\X(x)) = 2 \A\G_{ \OMN \times \{x\}}$. It is minimal for $\la(x)$ any solution of the linear system $\A\G \A^T\la(x) = \A\G_{\OMN \times \{x\}}$. This is valid for any column $\la(x)$ of $\La$. As a consequence, $\La$ is a solution if and only if it verifies
$$
\A\G \A^T\La = \A\G.
$$

\section{Proof of Lemma~\ref{lem:convolution_subsampling}}

\label{appendix:proof_convolution_subsampling}

\noindent  It is sufficient to identify  the convolutions on the canonical basis. For $\x \in \OMN$, we define $\delta_\x \in \R^{\OMN}$ such that for $\z \in \OMN$, $\delta_\x(\z) = \mathbb{1}_{\x = \z}$ and identically $\delta^{r}_\x$ for $\x \in \OMNr$. Denoting $\Tx$ the translation by $\x\in \OMN$, that is for all $\z\in\R^{\OMN}$, $\T_\x \U(\z) = \U(\z-\x)$. Remark that convolving with $\delta^{\OMN}_\x$ corresponds to translating by $\x$: for all $\U \in \R^{\OMN}, \U \star \delta_\x = \T_{\x}\U$.

 \begin{enumerate}
	\item For $\x \in \OMNr$, 
\begin{equation}
	\Sub\bC_{\balpha}\Sub^T\delta^{ r}_\x 
	= \Sub\bC_{\balpha} \delta_{r\x}  
	= \Sub (\balpha \star \delta_{r\x}) 
	= \Sub \T_{r\x} \balpha 
	= \Tx \Sub \balpha  
	= (\Sub \balpha) \star \delta^{r}_\x.
\end{equation}
	\item For $\x \in \OMNr$,
\begin{equation}
	\Sub^T\bC_{\be}\delta^{r}_\x  
	= \Sub^T (\be \star \delta^{r}_{\x})
	= \Sub^T \T_{x}\be 
     =  \T_{r\x} \Sub^T \be 
     = (\Sub^T \be) \star \delta_{rx} 
     = (\Sub^T \be) \star( \Sub^T \delta^{r}_\x ).
\end{equation}
\end{enumerate}
 
\noindent Finally, the equality of the convolutions on the canonical basis is established.
	
\section{Conjugate Gradient Descent algorithm}
\label{appendix:algo_CGD}

\noindent The CGD algorithm is an iterative method to approximate the solution of a linear equation by solving the associated least-squares problem \cite{Kammerer_convergence_CGD_1972_SIAM}. Algorithm~\ref{algo:CGD} aims at solving the normal equation
\begin{equation}
	\B^T\B\psi = \B^T\varphi 
\end{equation}
associated with the least-squares problem $\psi \mapsto \left\|\B\psi- \varphi\right\|_2$ and has good convergence properties when applied to the inpainting problem for Gaussian textures \cite{Galerne_Leclaire_gaussian_inpainting_siims2017}.
First we observed that this also applies in the case of super-resolution.
In our SR context, our goal is to express $\left(\A\G\A^T\right)^\dagger \V$ for a given LR image $\V \in \R^{\OMNr}$. It is equivalent to minimize $\psi \in \R^{\OMNr} \mapsto \left\| (\A\G\A)^T \psi - \V \right\|_2$. Therefore, we apply CGD with $B = \A\G\A^T$ and $\varphi = \V$.

\begin{algorithm}[H]
\caption{\small CGD Algorithm CGD to compute $\B^\dagger \varphi$}
\label{algo:CGD}
\begin{algorithmic}
\STATE \textbf{Input:} Initialize $k \leftarrow 0, \psi_0 \leftarrow 0, r_0 \leftarrow \B^T\varphi - \B^T\B\psi_0, d_0 \leftarrow r_0$
\WHILE{$\|r_k\|_2 > \varepsilon$}
\STATE $\alpha_k \leftarrow \frac{\|r_k\|_2^2}{d_k^T\B^T\B d_k}$
\STATE $\psi_{k+1} \leftarrow \psi_k +\alpha_k d_k$
\STATE $r_{k+1} \leftarrow r_k - \alpha_k \B^T\B d_k$
\STATE $d_{k+1} \leftarrow r_{k+1} + \frac{\|r_{k+1}\|_2^2}{\|r_{k}\|_2^2}d_k$
\STATE $k \leftarrow k+1$
\ENDWHILE
\STATE \textbf{Output:} $\psi_k$
\end{algorithmic}
\end{algorithm}

\section{Proof of Proposition~\ref{prop:MSE}}
\label{appendix:proof_prop_MSE}

\noindent Let $\UHR \in \R^{\OMN}$ be a HR image, $\ULR = \A\UHR$ its LR version, $\Uref$ a reference image,  $\G$ such that $\ADSN(\Uref) = \mathscr{N}(\zero,\G)$, $\La \in \R^{\OMNr \times \OMN}$ be the kriging operator and $\XSR$ the random image following the distribution of the SR samples generated with Equation~\ref{eq:sample_kriging}. $\XSR$ has the same law as:
$$\USR = \La^T \ULR + \tX - \La^T\A\tX$$
\noindent with $\tX \sim \mathscr{N}(\zero,\G)$. Consequently, considering $\E_{\tX}(\tX) = \zero$,
	$$
	\begin{aligned}
		\E_{\XSR}\left(\|\UHR - \XSR\|_2^2\right)
		= & \|\UHR-\La^T\ULR\|_2^2+\E_{\tX}\left(\|\tX - \La^T\A\tX\|_2^2\right) \\
		& +2\langle \UHR-\La^T\ULR,\E_{\tX}( \tX) -\La^T\A \E_{\tX}(\tX)\rangle  \\
		= & \|\UHR-\La^T\ULR\|_2^2+\Tr\left[(\I_{\OMN}-\La^T\A)\G(\I_{\OMN}-\La^T\A)^T\right]\\
	\end{aligned}
	$$
	
\noindent	And, $(\I_{\OMN}-\La^T\A)\G(\I_{\OMN}-\La^T\A)^T$ is a positive semi-definite matrix with non-negative trace which gives
	\begin{equation}
		\E_{\XSR}\left(\|\UHR - \XSR\|_2^2\right)\geq  \|\UHR-\La^T\ULR\|_2^2.
	\end{equation}
	
	\section{Proof of Proposition~\ref{prop:var}}
	\label{appendix:proof_prop_var}

\noindent	Let $\La^T \in \R^{3\OMN \times 3\OMNr}$ being in the form $\bC_{\etab}\Sub^T$ where the kernel $\etab\in \R^{3\OMN}$.
	The law of $\tX - \La^T(\A\tX)$ is invariant by translation by $(rk,r\ell)$ for $k,\ell \in \Z$ since $\tX$ follows a stationary law, $\La^T\A\tX = \bC_{\etab}\star \Sub^T(\A\tX)$ with $\A\tX \in \R^{3\OMNr}$ following a stationary law.

\end{document}